\newcommand{\E}{\mathbb{E}}
\renewcommand{\P}{\mathbb{P}}
\newtheorem{theorem}{Theorem}
\newtheorem{prop}{Proposition}
\newtheorem{definition}{Definition}
\newtheorem{remark}{Remark}
\newtheorem{proof}{Proof}
\title{Multiple testing using uniform filtering of ordered
  $p$-values}
\author{
Zhiwen Jiang\\
F. Hoffmann-La Roche AG and \'Ecole polytechnique f\'ed\'erale de
Lausanne (EPFL)\\
Basel, Switzerland \and
Stephan Morgenthaler\\
\'Ecole polytechnique f\'ed\'erale de Lausanne (EPFL) and REM
Analytics S.A.\\ Lausanne,
Switzerland\\
email: stephan.morgenthaler@epfl.ch}
\begin{document}

\maketitle

\begin{abstract}
  We investigate the multiplicity model with $m$ values of some
  test statistic independently drawn from a mixture of no effect
  (null) and positive effect (alternative), where we seek to identify
  the alternative test results with a controlled error rate.  We are
  interested in the case where the alternatives are rare. A number of
  multiple testing procedures filter the set of ordered p-values in
  order to eliminate the nulls. Such an approach can only work if the
  $p$-values originating from the alternatives form one or several
  identifiable clusters. The Benjamini and Hochberg (BH) method, for
  example, assumes that this cluster occurs in a small interval
  $(0,\Delta)$ and filters out all or most of the ordered $p$-values
  $p_{(r)}$ above a linear threshold $s \times r$. In repeated
  applications this filter controls the false discovery rate via the
  slope $s$. We propose a new adaptive filter that deletes the
  $p$-values from regions of uniform distribution. In cases where a
  single cluster remains, the $p$-values in an interval are declared
  alternatives, with the mid-point and the length of the interval
  chosen by controlling the data-dependent FDR at a desired level.
\end{abstract}

\textbf{keywords}:
  False discovery rate (FDR),   
  positive FDR,
  local FDR,  
  filtering of $p$-values,  
  heavy-tailed distribution,  
  mode estimation.


\section{Introduction}
The weighing of empirical evidence is important in many
disciplines and has gained new interest when combined with multiplicity. The classical
statistical tests are all designed for distributions linked to Gaussian errors
and this remains true in the multiple testing literature. 
The simplest random model for multiple tests is thus the Gaussian shift model
\begin{equation}
\label{Eqn:GaussianMix}
X_i \overset{\text{i.i.d.}}{\sim}
(1-\varepsilon)N(0, 1)+\varepsilon N(\mu, 1) \text{ for } i=1,\ldots,m \,,
\end{equation}
with $0\leq \varepsilon$ being the probability of drawing a true
alternative and $0< \mu$ the effect size. This occurs when the test
statistic is equal to a standardized average of repeated measures with
known variance.  Multiple testing refers to simultaneously considering
the family of null hypotheses $\{H_{0,i}:X_i\sim N(0,
1)\}_{i=1}^m$. Well known approaches include the Bonferroni-corrected
individual tests and the BH filter described by
\cite{benjamini1995controlling}. The first rejects only if the
$p$-value of an individual test is below $\alpha/m$, while the second
rejects the ordered $p$-values $p_{(r)}$ for $r<k$ if $k$ is the
smallest rank with the property that $p_{ (r)}>r\alpha/m$ for all
$r\geq k$. The first method ensures that the family-wise error rate
(FWER) is below $\alpha$, while the second has a false discovery rate
bounded by $\alpha$.  Besides identifying which of the $H_{0,i}$ to
reject, we might also be interested in the global test, which for
model (\ref{Eqn:GaussianMix}) is
\begin{equation}
\label{Eqn:GaussianMixGlobal}
H_{0}\ :\ \varepsilon=0 \text{ against }H_{1}\ :\ \varepsilon>0 \,.
\end{equation}
With the mixture model, probabilities such as
$P\left(H_{0,i} \text{ is true }\vert\text{ test rejects } H_{0,i}
\right)$ can be considered. This Bayesian point of view has been
investigated among others by \cite{efron2001empirical} or
\cite{efron2002empirical}.

A more general two group model is as follows:
\begin{equation}
\label{Eqn:MultipleTestingCh4}
 H_{0,i}:\ X_i \sim F_0 \quad \text{against} \quad H_{1,i}:\ X_i \sim F_1\,,  \quad   i=1, \ldots, m,   
\end{equation} 
Let $H^m=(H_1, \ldots, H_m)$ be the indicator variables where $H_i=0$
if and only if the $i$-th null hypothesis is true. It follows that
$H_i \overset{\text{i.i.d.}}{\sim} \text{Bernoulli}(\varepsilon)$.  We
will work in the case of continuous univariate distributions.  The
random $p$-value for the i-th test is
 \begin{equation}
 P_i=1-F_0(X_i)\,.
 \end{equation}
 If $X_i$ has distribution $F_0$, this random $p$-value is uniformly
 distributed in $(0,1)$.  If $X_i$ has distribution $F_1$, the
 distribution of $P_i$ is
$$F_{P}(t) = 1-F_1(F_0^{-1}(1-t))\text{ for }0<t<1  \,,$$  
with density function   
\begin{equation}
f_{P}(t) = \frac{f_1(F_0^{-1}(1-t))}{f_0(F_0^{-1}(1-t))}
= \exp\left(\mu z_{(1-t)}-\mu^2/2\right)\,.
\label{Eqn:DensityPvalue}
\end{equation}
The right-most expression applies to the Gaussian model
(\ref{Eqn:GaussianMix}) with effect size $\mu>0$ and the notation for
Gaussian quantiles $z_{1-t} = \Phi^{-1}(1-t)$.  The marginal density
of a randomly selected $p$-values under the mixture model is
$$\tilde f_{P}(t) 
=(1-\varepsilon) +\varepsilon\,\frac{f_1(F_0^{-1}(1-t))}{f_0(F_0^{-1}(1-t))}\text{ for }0<t<1,
$$
with distribution function $\tilde F_P(t)$.

In this article we investigate multiple testing procedures that work
with the complete set of $p$-values. The idea is to apply a filter
which deletes $p$-values in all regions where uniformity seems to
hold, thus making the search for regions containing true alternatives
easier. If a procedure classifies all test results with $p$-values in
the interval $[t\pm\Delta/2]$ as non-null, the true positive rate is
$\text{TPR}(t)=F_p(t+\Delta/2)-F_p(t+\Delta/2)$, while the false
positive rate equals $\Delta$. The larger the ratio between the two,
the better the procedure. In the limit, as $\Delta \to 0$ the ratio
converges to $f_P(t)$, which shows that one must look for modal values
of this density. Clusters of $p$-values are indicative for such modal
values.  The uniform filter will reveal the regions of clustering by
eliminating null $p$-values. If no such clusters emerge, it is due to
the second term in $\tilde f_{P}(t)$ not being sufficiently distinct
from the uniform first term. In such situations, no generic multiple
testing procedure can identify the alternatives, because they are not
detectable. In the two group model this occurs for small differences
between $F_0$ and $F_1$ or small values of $\varepsilon$ or both and
is an indication for underpowered studies.

The Cauchy case is considered in this paper as a radical counter
example to the Gaussian. The
shift model applies, if based on a single observation, the center of
symmetry is to be tested. For the standard Cauchy distribution $C(x)$
with the alternatives shifted by $\mu$, the density of the $p$-value is
\begin{align*} 
\tilde f_{P}(t)  
= (1-\varepsilon) +\varepsilon\, \left( {1+\frac{1}{\tan^2 (\pi t)}}\right) \left /  
\left(1+ \left( \frac{1}{\tan (\pi t)} - \mu\right)^2 \right) \right. .
\end{align*}
The corresponding density $f_P$ is bounded and with a small fraction
$\varepsilon$, the density $\varepsilon f_P$ is not easily
distinguished from the value of $1$ of the uniform density. In
addition, the mode of this density is at
\begin{equation}
\label{Eq:mode}
0 < p_c= \frac{1}{\pi}\left(\arctan \left(-\sqrt{1+\frac{\mu^2}{4}}-
    \frac{\mu}{2} \right)\right) +\frac{1}{2}\,.
\end{equation}
Other Cauchy alternatives could be considered. The t-test with samples
of size 2 has a standard Cauchy null distribution and a noncentral t
with one degree of freedom alternative. For testing a shift
alternative with several observations the median is an efficient test
statistic.

The goal of this paper is to propose an adaptive and broadly
applicable methodology for the multiplicity problem by relying on the
uniformity of all the $p$-values resulting from null hypotheses. This
will include the possibility of long-tailed distributions of the test
statistic. \cite{fan2019farmtest} also investigated the robust
multiple testing problem for correlated and long-tailed data, based on
an adaptive Huber covariance estimator and a factor-adjusted model.

\section{Uniform filtering of $p$-values}

In multiple testing, we want to know which hypotheses are most likely
false and how many of these should be rejected? As we remarked above,
this requires estimating the mode of $f_P(t)$, a problem can be
tackled via a density estimate of $\tilde f_P$. Some papers such as
\cite{efron2001empirical}, \cite{efron2004large},
\cite{genovese2004stochastic} and \cite{jin2007estimating} have made
contributions in this direction, but their methods are not adapted to
the range of cases we investigate here, because modes of $f_P$ may be
impossibly hard to spot in $\tilde f_P$.  In order to obtain an
accurate estimator of the mode we propose a method that reduces the
noise caused by the $p$-values resulting from the true null hypotheses
by deleting a fraction of the observed $p$-values.

\subsection{Fixed-length filters}

Let $ p^m = \left\{ p_1, \ldots, p_m \right\} $ be the complete set of
observed $p$-values. Our filter deletes the $p$-values guided by a
regularly spaced grid of interval midpoints. Suppose we plan to delete
$m_{\xi}= \lceil (1-\xi)m \rceil $ of the $p$-values in $p^m$, where
$\xi \in (0,1)$ is a tuning parameter. Consider the bins
$I_j = (c_j-1/(2m_\xi),c_j+1/(2m_\xi)]$ centered at
$c_j = (2j-1)/(2m_\xi)$ $(j=1,\ldots,m_\xi)$. The proposed filter
$\mathcal{T}$ uses the minimal distance between the
$p$-values and the bin centers $c_j$ as the deletion
criterion. Starting from $j=1$, it runs through $j=1,\ldots,m_\xi$ and
deletes for each $j$ among the remaining $p$-value the one closest to
$c_j$. It thus deletes exactly $m_\xi$ of the $p$-values in $p^m$. One
could also run the filter in the inverse direction, but this does not
make a big difference.

To heuristically investigate the properties of the filtering, consider
again the mixture model
$$ (1-\varepsilon) F_0(x) + \varepsilon F_1(x) $$
and delete $m_\xi$ values from $p^m$. Assume that $\xi > \varepsilon$
and that the density $f_P$ is Riemann integrable. Approximate $f_P$ by
a mixture of uniform distributions on the filter intervals of length
$1/m_\xi$. In any of the filter intervals $(l,u]$ containing
$p$-values from both null and alternative tests, the probability that
the $p$-value closest to an interval center is a true alternative is
then -- using the uniform approximation -- equal to the ratio of
$\varepsilon(F_P(u)-F_P(l))$ to
$(1-\varepsilon)(u-l)+ \varepsilon(F_p(u)-F_P(l))$. This can be
approximated by
$\big[\varepsilon f_P((u+l)/2)]/((1-\varepsilon) +\varepsilon
f_P((u+l)/2)\big]$.  This probability is equal to the expected number
of rejections of true alternatives. Summing over all intervals shows
that the expected number of false deletions due to the filter is
approximately equal to
\begin{equation}
I=\xi m \int_0^1 \frac{\varepsilon f_P(x)}{(1-\varepsilon) + \varepsilon f_P(x)} \,dx\,.
\label{FalseDels}
\end{equation}
Table \ref{tab:simulA} shows a comparison between the average of 10
simulations with the theoretical value of Eq. (\ref{FalseDels}) for
both Gaussian and Cauchy cases. The entries show that the formula
gives a useful approximation for large values of $m$.

\begin{table}
  \caption{\label{tab:simulA} The first five columns describe the
    mixture model for the test statistics with $F_1(x) = F_0(x-\mu)$
    and give the number $m$ of tests as well as the value of
    $\xi$. The sixth column is the number of $p$-values filtered
    out. The column true is the actual number of true alternatives,
    while the approximate expectation and the average over the
    simulations of the number of remaining alternative $p$-values are
    in the last two columns. The average is over 10 independent
    replications. The value of the integral was approximated by
    Riemann integration with 2000 equal-length intervals. The value in
    parenthesis is the standard error of the average.}
\centering
\begin{tabular}{cccc|c|cccc}
&&\multicolumn{4}{c}{~}&&\multicolumn{2}{l}{number of remaining alternatives}\\
  $F_0$&$\varepsilon$&$\xi$&$\mu$&$m$&$(1-\xi)m$&true&theoretical&simulated\\
  \hline
Gaussian&0.01&0.05&2&40000&38000&400&79.7&78.1 ($\pm 2.72$)  \\
Gaussian&0.01&0.05&3&40000&38000&400&204.3&202.8 ($\pm 3.0$) \\
Gaussian&0.01&0.05&5&40000&38000&400&374.6&373.3 ($\pm 1.1$) \\
Gaussian&0.01&0.01&5&40000&39600&400&373.6&373.0 ($\pm 4.9$) \\
Gaussian&0.01&0.005&5&40000&39800&400&--&199.9 ($\pm 0.3$) \\
Cauchy&0.01&0.05&10&40000&38000&400&130.6&124.4 ($\pm 3.5$) \\
Cauchy&0.01&0.05&20&40000&38000&400&229.4&230.2 ($\pm 3.1$) \\
Cauchy&0.01&0.05&40&40000&38000&400&307.4&306.0 ($\pm 1.8$)       
\end{tabular}

\end{table}

Before filtering the fraction of true alternatives is
$\varepsilon$. After the application of the filter, the fraction in
the remaining $p$-values from true alternatives is approximately equal
to $(\varepsilon m - (1-\xi) m I)/(\xi m)$, where $I$ is the value of
the integral in Eq. (\ref{FalseDels}). The maximal possible fraction
is $\varepsilon/\xi$, which occurs if $I$ is very close to 0 and the
filter only deletes $p$-values originating from true nulls. When
$F_0=F_1$ and $I=\varepsilon$, the fraction of true alternatives among
the remaining $p$-values remains at the original value of
$\varepsilon$.  The filter is intended to enrich the true alternatives
among the remaining $p$-values. The formula shows that the enrichment
works best with small values of $\xi$.

\subsection{Large $m$ and small $\varepsilon$}

Table \ref{tab:simulA} also shows the effect of the shift $\mu$ on the
number of false deletions of true alternatives. The larger the shift,
the easier it is to detect the alternatives. The shift of 5 standard
errors with a Gaussian test statistic is examined by choosing three
small values of $\xi$. The numbers show that the uniform filter quite
easily finds the most relevant range of $p$-values. In 9 of the 10
simulations with the smallest value of $\xi$, all the remaining
$p$-values were alternatives. What would happen, if we chose a larger
number of tests? How large a $\mu$ would we need to reach similar
certainty. If $\varepsilon = \varepsilon_m$ goes to zero as the number
of tests increases, the numerator of the integrand in
(\ref{FalseDels}) goes to zero and the uniform filtering will not
work. In order to make the problem interesting, the mode in the
$p$-value density also has to grow as $m$ increases.

Suppose the mode of $f_{P,m}$ is at location $M_m$.  In order for the
filtering to work, the clustering must occur on the scale defined by
$1/m$ and $\varepsilon_n f_{P,m}(M_m+ 1/m)$ must grow with larger
$m$. Let the fraction $\varepsilon_m$ of true alternatives be a
decreasing fraction of $m$, for example, by choosing
$\varepsilon_m = m^{-\gamma}$ with $0.5<\gamma<1$. In the Gaussian
case, the peak will become more pronounced if $\mu_m$ grows with
$m$. The density $f_{P,m}(x)= \exp(-(z_x-\mu_m)^2/2)/\exp(-z_x^2/2)$
has a singularity at $M_m=0$. For small $x$,
$z_x\approx \sqrt{-2\log(x)}$ and thus
$f_{P,m}(0+1/m)\approx \exp(\sqrt{2\log(m)}\mu_m-\mu_m^2/2)$. Both
terms are multiples of $\log(m)$, if $\mu_m= C \sqrt{\log(m)}$ is
proportional to $\sqrt{\log(m)}$. Multiplying this by $m^\gamma$ leads
to
$m^{\sqrt{2}C-C^2/2-\gamma}$. The smaller root of this quadratic is
$C_\text{low} = \sqrt{2*(1-\sqrt{(1-\gamma)} )}$. Any $C > C_\text{low}$
  will lead to a dense cluster of alternatives that can be detected by
  uniform filtering. See also \cite{ingster1997},
\cite{donoho2004higher}.

\subsection{Exploring the $p$-values from a multiple test}

The fixed-length filter offers a way to explore the set of $p$-values
from a multiple test. To do so, divide the interval $(0,1)$ into $m$
equal length bins and count then number of $p$-values in each bin. If
all $m$ $p$-values were uniformly distributed, the expectation for all
counts would be 1 and the counts would approximately follow a Poisson
distribution with expectation 1.  
It is easiest to plot the counts in the order of the bins, starting
from $p=0$. Fig. \ref{Fig:Barplot} shows four examples with different
distributions and different values of the shift $\mu$. In the upper
left corner is a case of $m=200$ test based on a test statistic with
Gaussian distribution and a shift or effect of $\mu=1$. There clearly
is no cluster in the range of $p$-values below 0.2. If the shift is
increases to 4, a clear peak of nine $p$-values appears in the first
bin. The lower row in Fig. \ref{Fig:Barplot} shows longer tailed test
statistics. In the Student's $t_3$ case, a hint of a peak is in the
second bin, which does indeed contains two $p$-values generated by
alternatives, whereas the second Cauchy-distributed case shows a very
clear peak of five bins with $p$-values ranging from 0.0077 to
0.0082.

\begin{figure}
\centering
\includegraphics[width=0.95\textwidth]{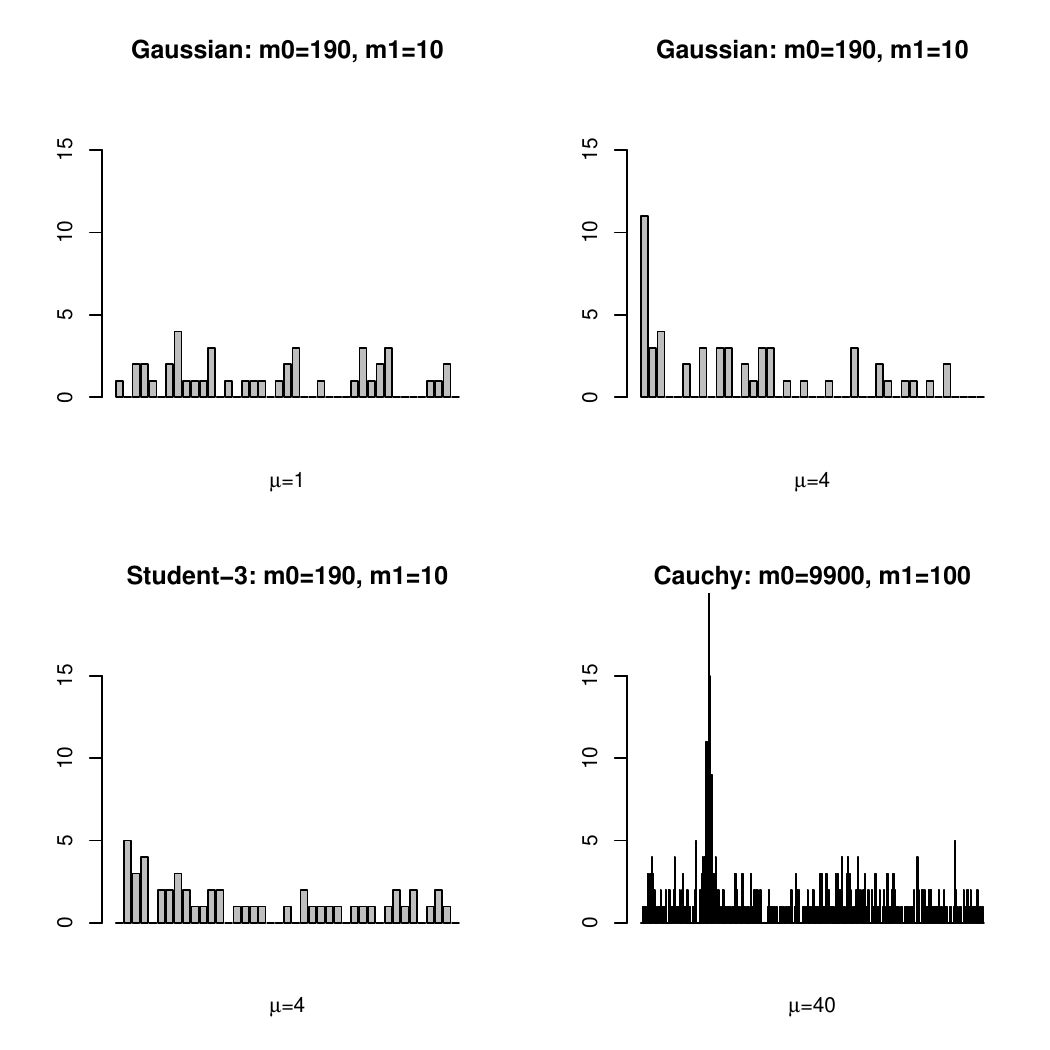}
\caption{\label{Fig:Barplot} The plots show the counts in the bins of
  width $1/m$ up to the value of 0.2. The plots on the left show
  examples without clusters. The counts of null $p$-values in the
  filter intervals form a Poisson process, whose maxima grow very
  slowly with $(1-\varepsilon)m$ and are around 4 to 5 in all cases
  shown. The plots on the right exhibit clear clusters.}
\end{figure}

\subsection{Behavior for long-tailed distributions}

In contrast to the Gaussian case, the following results covers a
long-tailed distribution and with a filter that is easier to
analyze. The ``random filter'' also uses a regular grid with intervals
of length $1/m_\xi$. It operates by deleting one randomly chosen
$p$-value from each non-empty interval. 
A random filter deletes less than $m_\xi$ $p$-values because of empty
intervals. Since the gaps between the ordered uniform $p$-values have
a beta$(1,(1-\varepsilon)m)$ distribution, the probability of a gap
larger than $2/m_\xi$ is equal to
$(1-2/m_\xi)^{(1-\varepsilon)m}\approx
\exp(-2(1-\varepsilon)/(1-\xi))$. Gaps of this size imply at least one
empty interval.

The result shows that the
uniform filtering works as required, but the range of detectability
changes. The next result shows that the randomised filtering procedure
essentially preserves the true alternatives in the limit as $m\to\infty$. The
tight clustering of the $p$-values from the alternatives are
responsible for this result.

\begin{theorem}[Asymptotic filtering] 
We consider the standard Cauchy mixture model,
$$ (1-\varepsilon_m) C(x) + \varepsilon_m C(x-\mu_m) \,,$$
where $ \varepsilon_m = m^{-\gamma} $ and $ \mu_m = m^{r} $.  If the
parameters $(\gamma, r)$ satisfy
$$
0 < \gamma\text{ and } r > 1- \frac{\gamma}{2}\,,
$$
it follows that if we use the ``random filter'' with parameter $\xi$, the expected fraction
of the number $\text{FE}_\xi$ of falsely deleted alternatives among all true
alternatives $\varepsilon_mm$ converges to zero, that is,
$$
\label{Eqn:EFEvanishing}
\frac{\E( \text{FE}_\xi) }{\varepsilon_m m}  \longrightarrow 0,  \quad  m \to \infty, 
$$
\end{theorem}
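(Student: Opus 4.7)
The plan is to show that, under the hypothesis on $(\gamma, r)$, essentially all alternatives fall into a single pair of filter bins around the mode $p_c$, so the random filter kills at most two of them, while outlier alternatives are rare enough in expectation that the ratio vanishes.

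First I would switch to the coordinate $u = 1/\tan(\pi t)$, in which the Cauchy-alternative $p$-value density displayed in the paper becomes the Cauchy density $1/(\pi(1+(u-\mu_m)^2))$ centered at $\mu_m = m^r$. The Jacobian $|du/dt| = \pi(1+u^2)$ equals $\pi\mu_m^2(1+o(1)) = \pi m^{2r}(1+o(1))$ in a neighborhood of $\mu_m$, so a filter bin of $t$-width $1/m_\xi \sim 1/m$ corresponds, near the mode, to a $u$-width of order $A_m := m^{2r-1}$.

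Next I would define the \emph{core} $I^\star$ to be the union of the (at most two) filter bins that intersect $p_c$. In $u$-coordinates $I^\star$ is an interval of length $\Theta(A_m)$ containing $\mu_m$, and the standard Cauchy tail bound gives
$$
\P(\text{an alternative falls outside } I^\star) \;\leq\; \frac{2}{\pi A_m} \;=\; O\!\left(m^{\,1-2r}\right).
$$
Since the expected number of alternatives is $\varepsilon_m m = m^{1-\gamma}$, the expected number of \emph{outlier} alternatives (those outside $I^\star$) is at most
$$
\varepsilon_m m \cdot O(m^{1-2r}) \;=\; O\!\left(m^{\,2-\gamma-2r}\right),
$$
and the hypothesis $r > 1 - \gamma/2$ gives $2-\gamma-2r < 0$, so this quantity tends to zero.

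Now I decompose $\mathrm{FE}_\xi = \mathrm{FE}_\xi^{\star} + \mathrm{FE}_\xi^{\mathrm{out}}$, where the first term counts false deletions occurring inside $I^\star$ and the second counts those outside. The random filter removes exactly one $p$-value from each non-empty bin, so deterministically $\mathrm{FE}_\xi^{\star} \leq 2$ (at most two bins can contribute), while $\mathrm{FE}_\xi^{\mathrm{out}}$ is trivially dominated by the total number of outlier alternatives. Taking expectations yields $\E(\mathrm{FE}_\xi) \leq 2 + O(m^{2-\gamma-2r}) = O(1)$; dividing by $\varepsilon_m m = m^{1-\gamma}$, which tends to infinity in the regime $\gamma < 1$ where the ratio is nontrivial, gives the conclusion.

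The main obstacle is the Jacobian approximation near $\mu_m$: the bound $|du/dt| = \pi\mu_m^2(1+o(1))$ over all of $I^\star$ is legitimate only while the $u$-diameter $A_m = m^{2r-1}$ is small compared to $\mu_m = m^r$, i.e.\ while $r < 1$. For $r \geq 1$ a single filter bin already dwarfs $\mu_m$ in $u$-coordinates, so $I^\star$ captures nearly all of the Cauchy mass and the outlier contribution is controlled even more easily by the same tail estimate. Minor bookkeeping issues—the randomness of the true count $N \sim \mathrm{Binomial}(m,\varepsilon_m)$, the possibility that $p_c$ lies near a bin boundary (hence the factor of two), and the negligible number of empty bins flagged in the preamble to the theorem via the $\mathrm{beta}(1,(1-\varepsilon)m)$ gap bound—each cost only a constant factor and do not affect the leading-order estimate.
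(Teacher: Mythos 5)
Your argument is correct in the regime $0<\gamma<1$ and reaches the conclusion by a genuinely different route from the paper. The paper works with the per-bin composition probability $\varepsilon_m f_P(c_j)/((1-\varepsilon_m)+\varepsilon_m f_P(c_j))$, converts the sum over bins into $m_\xi\int_0^1 \varepsilon_m f_P/((1-\varepsilon_m)+\varepsilon_m f_P)\,\mathrm{d}t$, and bounds that integral analytically through a chain of trigonometric substitutions, ending with a bound of order $\sqrt{\varepsilon_m}\,m_\xi/\mu_m$ for the expected fraction. You instead localize: the hypothesis $r>1-\gamma/2$ is used exactly once, to make the expected number of alternatives falling outside the $O(1)$ core bins vanish ($m^{1-\gamma}\cdot O(m^{1-2r})=O(m^{2-\gamma-2r})\to 0$), while the core contributes at most one deletion per bin deterministically. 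This buys a stronger conclusion, $\E(\text{FE}_\xi)=O(1)$ rather than merely $o(\varepsilon_m m)$, and it makes transparent what the theorem is really about: under the stated hypothesis the alternative cluster is narrower than a single filter bin. Your treatment of the Jacobian (uniform on the core only for $r<1$, with $r\geq 1$ absorbed by a cruder tail estimate) correctly patches the one delicate analytic point.

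The one genuine gap is the case $\gamma\geq 1$, which the stated hypothesis ``$0<\gamma$'' permits and which you explicitly set aside. There your bound $\E(\text{FE}_\xi)\leq 2+o(1)$ gives nothing, since $\varepsilon_m m=m^{1-\gamma}$ no longer tends to infinity; and the case cannot be rescued by a better argument, because for $\gamma\geq 1$ the core bins contain only $O(1)$ alternatives alongside $O(1)$ nulls, so each alternative is deleted with probability bounded away from zero and the expected fraction does not vanish (at $\gamma=1$ it tends to roughly $\E[(1+K)^{-1}]>0$ with $K$ the Poisson number of nulls sharing the bin). So your restriction is forced by the mathematics rather than being a defect of your approach; the paper's own first display appears to carry a spurious factor of $\varepsilon_m m$ (the sum over $i$ with $H_i=1$ is applied to a per-bin quantity that already aggregates over all alternatives), and it is this factor that lets its final bound read $m^{1-\gamma/2-r}$ for every $\gamma>0$. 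You should state explicitly that your proof covers $\gamma<1$ --- the regime the paper itself adopts in its second theorem and in Section 2.2 --- and flag the degeneracy of the statement outside it.
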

\begin{proof}
The expected value of the false exclusions is
$$
\E( \text{FE}_\xi) = \E \left(\sum_{i:\, H_i =1} \textbf{1} \left \{    P_i \in \mathscr{D}     \right \}\right) \,,
$$
where $\textbf{1}$ denotes an indicator function and $\mathscr{D}$ the
random set of deleted $p$-values. 
With the filter intervals $\left\{ I_j,\; j=1, \ldots,  m_{\xi}
\right\}$ on $[0,1]$ and their mid-points $c_j$, we obtain
\begin{equation*}
\begin{aligned} 
\E(\text{FE}_\xi) & =  \sum_{i:\, H_i =1}
\sum_{j=1}^{m_{\xi}} \P \left( P_i \in ( \mathscr{D} \cap I_j )
\right)    = \varepsilon_m m  \sum_{j=1}^{m_{\xi}}
\frac{ \varepsilon_m f_P (c_j) }{  (1-\varepsilon_m) + \varepsilon_m f_P
  (c_j) } + O(1/m)  \\
& = \varepsilon_m m  m_{\xi}  \int_0^1
\frac{\varepsilon_m f_P (t)}{(1-\varepsilon_m) + \varepsilon_m f_P (t)}
\,\text{d}t + O(1/m)  =  \varepsilon_m \varepsilon_m m  m_{\xi}
\int_0^1 \frac{1 }{(1-\varepsilon_m) \frac{1}{f_P(t)} + \varepsilon_m  }
\,\text{d}t  .   \\
\end{aligned}
\end{equation*}

Since  the function $1/ f_P$ is bounded on $[0,1]$ and has a single peak at  
$$\frac{1}{\pi}\left(\arctan
  \left(-\sqrt{1+\frac{\mu^2}{4}}-\frac{\mu}{2} \right)\right) +1 \geq
\frac{1}{2}\,,$$ 
we have the integral over $[0,1]$ upper bounded by twice the integral
over the right half. Thus, 

\begin{equation*}
\begin{aligned}
\E(\text{FE}_\xi)
& \leq   2 \varepsilon_m m_{\xi} \varepsilon_m m
\int_{\frac{1}{2}}^1 \frac{1 }{(1-\varepsilon_m)  \frac{1+\left( \tan
      (\pi/2 - \pi t) - \mu \right)^2}{1+ \tan^2 (\pi/2 - \pi t) }+
  \varepsilon_m   } \,\text{d}t    \\
& =   \frac{2 \varepsilon_m^2 m_{\xi} m}{\pi }  \int_{0}^{\frac{\pi}{2}}
\frac{1}{(1-\varepsilon_m)  \frac{1+\left( \tan x + \mu \right)^2}{1+
    \tan^2 x }+ \varepsilon_m   }   \,\text{d}x     \\
& \leq   \frac{2 \varepsilon_m^2 m_{\xi}  m}{\pi }
\int_{0}^{\frac{\pi}{2}}    \frac{1}{(1-\varepsilon_m)  (1+ \mu^2)
  \cos^2 x  + \varepsilon_m   }   \,\text{d}x     \\
& =    \frac{2 \varepsilon_m^2 m_{\xi} m}{\pi }
\int_{0}^{\frac{\pi}{2}}    \frac{\sec^2 x}{(1-\varepsilon_m)  (1+
  \mu^2)   + \varepsilon_m (1+\tan^2 x)  }  \, \text{d}x     \\
& = \frac{2 \varepsilon_m^2 m_{\xi} m }{\pi }   \int_{0}^{\infty}
\frac{1}{(1-\varepsilon_m)  (1+ \mu^2)   + \varepsilon_m + \varepsilon_m y^2
}   \, \text{d}y     \\
& =  \frac{2 \varepsilon_m^2 m_{\xi} m}{\pi }   \frac{1}{(1-\varepsilon_m)
  (1+ \mu^2)   + \varepsilon_m }  \int_{0}^{\infty}   \frac{1}{  1+
  \left( \frac{\sqrt{\varepsilon_m} y }{  \sqrt{ (1-\varepsilon_m)  (1+
        \mu^2)   + \varepsilon_m  }} \right)^2 }  \, \text{d}y     \\
& =  \varepsilon_m^2 m_{\xi}  m  \frac{1}{\sqrt{\varepsilon_m} \sqrt{
    (1-\varepsilon_m)  (1+ \mu^2)   + \varepsilon_m  } }\,. 
\end{aligned}
\end{equation*}

Recall that we consider the parametrisation 
$$ \varepsilon_m = m^{-\gamma} ,\quad  \mu_m = m^{r} \,.$$
Therefore, the expected proportion satisfies
$$
\frac{\E(\text{FE}_\xi) }{\varepsilon_m m}  
\leq  \frac{\sqrt{\varepsilon_m} m_{\xi} }{ \sqrt{ (1-\varepsilon_m)  (1+ \mu^2)   + \varepsilon_m  } }
= \frac{1-\xi}{ m^{r+\gamma /2-1} (1+ o(1) ) }
\longrightarrow 0  
$$
as  $m \to \infty,$ if 
$$
r >  1- \frac{\gamma}{2}. 
$$

\end{proof}
  

\begin{theorem}
  Consider the Cauchy mixture model for the test statistics with
  $ \varepsilon_m = m^{-\gamma} $ for $0.5 < \gamma <1$ and
  $ \mu_m = m^{r} $. Let $P_1,\ldots,P_m$ be an i.i.d. sample of
  $p$-values from this model and apply the fixed-length filter
  deleting $m_\xi = \lceil(1-\xi)m\rceil$ of the $p$-values. If
$$
r > 1- \frac{\gamma}{2}\,,
$$
it follows that
$$
\label{Eqn:EFExivanishing}
\frac{\E\, (\text{FE}) }{m\varepsilon_m}  \longrightarrow 0,  \quad  m \to \infty\,.
$$
\end{theorem}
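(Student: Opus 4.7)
The plan is to mirror the proof of Theorem 1 step by step, replacing the random filter's selection rule by the fixed-length filter's ``delete the $p$-value closest to $c_j$'' rule. The heuristic preceding Eq.~(\ref{FalseDels}) was in fact written for the fixed-length filter, so once that heuristic is made rigorous the entire computational apparatus of Theorem 1 can be reused downstream.

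Concretely, I would start from
$$\E(\text{FE}) = \sum_{i : H_i = 1}\sum_{j=1}^{m_\xi}\P\bigl(P_i \text{ is deleted in iteration } j\bigr).$$
A boundary argument shows that if $I_j$ is non-empty then the $p$-value closest to $c_j$ lies in $I_j$, since the half-width $1/(2m_\xi)$ of $I_j$ equals the minimum distance from $c_j$ to the exterior of $I_j$. Within each non-empty $I_j$, the local uniform approximation of $f_P$ on a bin of width $1/m_\xi$ makes the positions of the $p$-values exchangeable, so the deterministic closest-to-$c_j$ rule is distributionally a uniformly random pick. The probability that this pick is an alternative is then $\varepsilon_m f_P(c_j)/((1-\varepsilon_m) + \varepsilon_m f_P(c_j)) + o(1)$, matching Theorem 1 exactly. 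Summing over $j = 1, \ldots, m_\xi$ as a Riemann sum yields
$$\E(\text{FE}) \leq \varepsilon_m m \cdot m_\xi \int_0^1 \frac{\varepsilon_m f_P(t)}{(1-\varepsilon_m) + \varepsilon_m f_P(t)}\,dt + o(\varepsilon_m m).$$

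From here the chain of inequalities in Theorem 1 transfers verbatim: symmetrization using the unimodality of $1/f_P$ with peak location $\geq 1/2$, the substitutions $x = \pi/2 - \pi t$ and $y = \tan x$, the bound $(\tan x + \mu_m)^2 \geq (1+\mu_m^2)\cos^2 x$, and evaluating the resulting arctangent integral produce
$$\frac{\E(\text{FE})}{\varepsilon_m m} \leq \frac{\sqrt{\varepsilon_m}\, m_\xi}{\sqrt{(1-\varepsilon_m)(1+\mu_m^2) + \varepsilon_m}} = \frac{1-\xi}{m^{r + \gamma/2 - 1}(1+o(1))},$$
which vanishes whenever $r > 1 - \gamma/2$, giving the desired conclusion.

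The main obstacle is justifying the exchangeability reduction for a sequential, globally coupled rule. Two issues arise. First, empty bins (whose expected fraction is approximately $\exp(-1/(1-\xi))$, a constant) force iteration $j$ to pick a $p$-value from a neighbouring bin, creating dependence across iterations; this is handled by observing that such cross-bin picks still lie within $O(1/m_\xi)$ of $c_j$, so the density-ratio factor governing the label of the deleted $p$-value is unchanged to leading order. Second, near the mode $p_c$ the density $f_P$ is large and varies too rapidly for the uniform approximation to hold bin-by-bin; the conditions $\gamma > 1/2$ and $r > 1 - \gamma/2$ concentrate the $\varepsilon_m m$ alternatives in a shrinking neighbourhood of $p_c$, so only a vanishing fraction of bins feels the rapid variation of $f_P$ and their contribution is absorbed in the $o(\varepsilon_m m)$ remainder.
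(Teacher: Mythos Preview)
Your proposal is correct at the level of rigor the paper adopts, and it reaches the same integral bound and the same Cauchy computation as the paper. The one genuine structural difference is the choice of partition. You work with the \emph{fixed} filter bins $I_j$ centered at $c_j$, exactly as in Theorem~1, and then have to argue separately about empty bins and the sequential cross-bin coupling. The paper instead lets the filter run first and then uses the \emph{random} partition $I_j^* = (P_{(j-1)}^{\xi}, P_{(j)}^{\xi}]$ generated by the ordered deleted $p$-values themselves, with midpoints $c_j^*$. By construction each $I_j^*$ contains exactly one deleted $p$-value, so the empty-bin and cross-bin issues you flag simply do not arise; the density-ratio expression for the probability that the deleted point in $I_j^*$ is an alternative is then invoked directly, and the Riemann sum over the random partition is passed to the integral just as before.

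What each approach buys: the paper's data-dependent partition is slicker because it absorbs the sequential nature of the fixed-length filter into the definition of the partition, eliminating the two obstacles you spend a paragraph handling. Your approach has the advantage of staying on the same deterministic grid as Theorem~1, so the parallel is more transparent, but at the cost of the extra empty-bin and mode-neighbourhood arguments. Both routes are heuristic at the step where the conditional label probability is replaced by $\varepsilon_m f_P/((1-\varepsilon_m)+\varepsilon_m f_P)$ evaluated at a bin midpoint; neither the paper nor your proposal makes that step fully rigorous, and the paper's version with random midpoints $c_j^*$ is, if anything, slightly more delicate to justify formally.
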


\begin{proof}
The sequence of deleted $p$-values, $\left\{ P_{j}^{\xi} ,\; j=1,
  \ldots,  m_\xi  \right\}$
forms a partition on $(0,1]$, which we denote by 
$$ I_j^* = \left( P_{(j-1)}^{\xi},\; P_{(j)}^{\xi} \right] ,   \quad j=1, \ldots,  m_\xi+1,$$
with $P_{(0)}^{\xi}=0$ and $P_{( m_\xi +1)}^{\xi}=1$. 
Let $\left\{ c_j^*,\; j=1, \ldots,  m_\xi +1  \right\}$ be the mid-points.  
It follows that
\begin{equation*}
\begin{aligned}
\frac{\E\, \big | \mathrm{FE}_{\xi}  \big | }{\varepsilon_m m}
& = \frac{1}{\varepsilon_m m} \E \sum_{i:\, H_i =1} \textbf{1} \left \{
  P_i \in \mathscr{D}     \right \}   \\
&=  \frac{1}{\varepsilon_m m}  \sum_{i:\, H_i =1}  \sum_{j=1}^{ m_\xi +1
} \P \left( P_i \in ( \mathscr{D} \cap I_j^* )    \right)     \\
& =   \sum_{j=1}^{ m_\xi+1} \frac{ \varepsilon_m f_P (c^*_j) }{
  (1-\varepsilon_m) + \varepsilon_m f_P (c^*_j) } + O(1/m)  \\
& =   ( m_\xi +1 ) \int_0^1 \frac{\varepsilon_m f_P
  (t)}{(1-\varepsilon_m) + \varepsilon_m f_P (t)} \,\text{d}t + O(1/m)  \\
&=  \varepsilon_m   ( m_\xi +1 )  \int_0^1 \frac{1 }{(1-\varepsilon_m)
  \frac{1}{f_P(t)} + \varepsilon_m  }  \,\text{d}t     \\
& = \frac{1-\xi}{ m^{r+\gamma /2-1} (1+ o(1) ) }
\longrightarrow 0, \quad m \longrightarrow \infty, 
\end{aligned}
\end{equation*}
if $r > 1- \frac{\gamma}{2}\,, $  which is the same asymptotic boundary as (\ref{Eqn:EFEvanishing}).
\end{proof}
'
\section{A multiple test based on uniform filtering}

In this section, we show how to formalize the filter approach to
obtain a multiple testing procedure in situations where a unique mode of the
alternative $p$-value density exists. Our method uses the remaining
$p$-values after filtering to estimate the mode. We then
return to the original sequence of $p$-values and determine the
rejection region surrounding the mode in such a manner
that the estimated false discovery rate is below a chosen value.
Fig. \ref{fig:CauchyFilter} shows an example.

\begin{figure}
    \centering
    \includegraphics[width=0.49\textwidth]{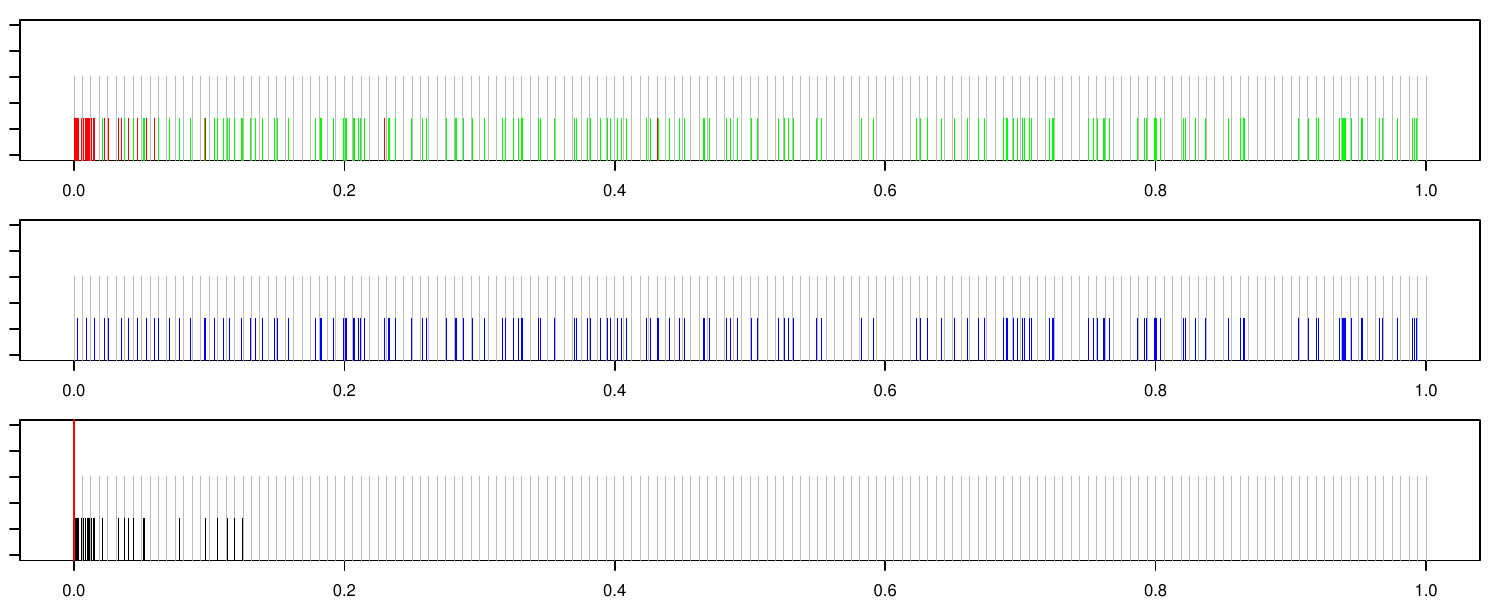}
    \includegraphics[width=0.49\textwidth]{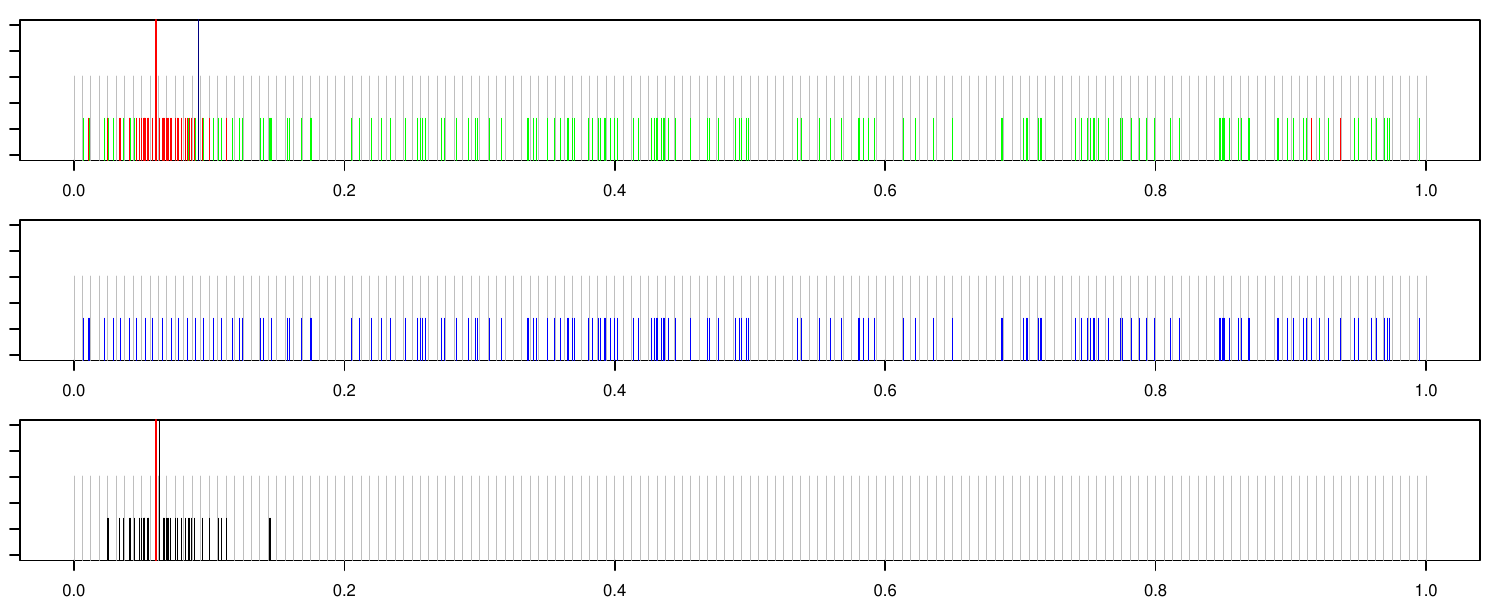}
    
    \caption{\label{fig:CauchyFilter} The effect of filtering and the
      identification of the mode. The plots in the top line show the
      sample of $p$-values. The red values correspond to the true
      alternatives. The middle lines show the deleted $p$-values and
      the bottom line are the remaining ones. The vertical lines
      indicate the estimated location of the mode. }
    
\end{figure}

\subsection{Convergence of the filtered mode as $m\to\infty$}
Let the unique mode be
$$
 \vartheta =  \underset{t}{\arg \max}\; \tilde f_P(t) =
 \underset{t}{\arg \max}\;  \left( (1-\varepsilon) + \varepsilon f_P
   (t) \right) =  \underset{t}{\arg \max} \;  f_P(t)
$$
and let $\{p^{*}_1, \ldots,p^{*}_{\lfloor \xi m\rfloor}\}$ be the
remaining $p$-values after the use of a fixed-length filter.

Assume that the cumulative distribution and the density of the
$p$-values under the alternative, namely $F_P$ and $f_P$, satisfy 
\begin{enumerate}
\item $F_P$ is not concave,  
\item $\lim_{t\to 0^{+}}\frac{\text{d} F_P(t)}{\text{d} t} = 1$,  
\item $\lim_{t\to 0^{+}}\frac{\text{d} f_P(t)}{\text{d} t} =1$,
\item $f_P$ is unimodal,
\item $f_P(t)$ is uniformly continuous in $t$.
\end{enumerate}

The filtered $p$-values $\mathscr{F}^\xi  = \left\{ P_1^{*}, \ldots, P_{\lfloor \xi m\rfloor}^{*} \right\}$
have distribution function $F_P^{*}$ and density function $f_P^{*}\,.$ 
We also assume that the density $f_{P}^{*} (t)$ has a unique mode $\vartheta^{\xi}$ defined by
$$
f_{P}^{*} (\vartheta^{\xi}) = \max_{0 \leq t \leq 1}\, f_{P}^{*} (t)\,.
$$

With a bandwidth $h$ and the kernel
function $K$, the filtered density estimate becomes 
$$
\hat f_P^{(\xi,h)}(t) = \frac{1}{h\lfloor \xi m\rfloor}
\sum_{j=1}^{\lfloor \xi m\rfloor} K\left( \frac{t-p^{*}_j}{h}
\right)\,, 
$$
of which the sample mode will be shown to converge to the true mode.

\begin{definition}
The random variable $\hat\vartheta^{\xi,h}$ such that 
$$
{\hat f_P}^{(\xi,h)} (\hat\vartheta^{\xi,h}) = \max_{0 \leq t \leq 1}\, {f_P}^{(\xi,h)}  (t)
$$
is called the sample mode.
\end{definition}

In kernel density estimation, the bandwidth $h$ balances the bias and the variance of the estimator, which implies that the value of $h$ has to be adapted to the value of $m$.  In the following, we consider the situation where $m\to\infty$ with a fixed value of $\xi$ and we assume $h=h(\xi,m)$ satisfies 
\begin{equation}
\label{eqn:BandWidth}
\lim_{\lfloor \xi m\rfloor \to \infty}\, h(\xi,m) = 0, \quad   \lim_{\lfloor \xi m\rfloor \to \infty}\, \lfloor \xi m\rfloor h^2(\xi,m)  =  \infty .
\end{equation}

\begin{theorem}
\label{thm:ModeConvergence}
Suppose $\hat\vartheta^{(\xi,h)}$ is the sample mode and $h=h(\xi,m)$ is a bandwidth satisfying (\ref{eqn:BandWidth}).  
Then $\hat\vartheta^{(\xi,h)} \to \vartheta$ in probability.

\end{theorem}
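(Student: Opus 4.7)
The plan is to follow the classical Parzen-style consistency argument for a kernel density mode estimator, adapted to the filtered sample. The route has two parts: first, establish uniform convergence of $\hat f_P^{(\xi,h)}$ to the density $f_P^{*}$ of a remaining $p$-value; second, use the well-separated-maximum structure of $f_P^{*}$ to deduce convergence of the maximizer. A preliminary observation is that the two modes $\vartheta^{\xi}$ and $\vartheta$ should coincide: mimicking the enrichment calculation leading to (\ref{FalseDels}), the density of a surviving $p$-value is (asymptotically) the affine function $f_P^{*}(t) = \frac{\xi - \varepsilon}{\xi} + \frac{\varepsilon}{\xi} f_P(t)$, obtained from $\tilde f_P$ by subtracting a uniform contribution of mass $1-\xi$. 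Since the transformation $t \mapsto a + b\, f_P(t)$ with $b > 0$ preserves the maximizer, $\vartheta^{\xi} = \vartheta$, and it suffices to prove $\hat\vartheta^{(\xi,h)} \to \vartheta^{\xi}$ in probability.

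For the uniform convergence step, split
\[
\sup_t \bigl| \hat f_P^{(\xi,h)}(t) - f_P^{*}(t) \bigr|
\le
\sup_t \bigl| \hat f_P^{(\xi,h)}(t) - \E \hat f_P^{(\xi,h)}(t) \bigr|
+
\sup_t \bigl| \E \hat f_P^{(\xi,h)}(t) - f_P^{*}(t) \bigr|.
\]
The bias term is handled by the standard convolution argument: $\E \hat f_P^{(\xi,h)}(t) = (K_h * f_P^{*})(t)$ up to boundary effects, and uniform continuity of $f_P^{*}$ (inherited from assumption 5 on $f_P$) combined with $h \to 0$ forces this convolution to $f_P^{*}$ uniformly. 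For the stochastic term, a pointwise second-moment bound gives $\mathrm{Var}(\hat f_P^{(\xi,h)}(t)) = O\bigl(1/(\lfloor \xi m\rfloor h)\bigr)$; a covering of $[0,1]$ by $O(1/h)$ balls together with the Lipschitz property of $K$ upgrades the pointwise bound to uniform convergence in probability precisely under the bandwidth condition $\lfloor\xi m\rfloor h^2 \to \infty$ in (\ref{eqn:BandWidth}).

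The passage from uniform consistency of the density to convergence of the mode is a deterministic argmax-continuity argument. Fix $\eta > 0$; since $f_P^{*}$ is continuous on the compact set $[0,1]$ with a unique maximum at $\vartheta$, the quantity $\delta := f_P^{*}(\vartheta) - \sup_{|t-\vartheta| > \eta} f_P^{*}(t)$ is strictly positive. On the event $\{\sup_t |\hat f_P^{(\xi,h)}(t) - f_P^{*}(t)| < \delta/2\}$, the sample mode obeys
\[
\hat f_P^{(\xi,h)}(\hat\vartheta^{(\xi,h)}) \;\geq\; \hat f_P^{(\xi,h)}(\vartheta) \;>\; f_P^{*}(\vartheta) - \delta/2,
\]
which forces $f_P^{*}(\hat\vartheta^{(\xi,h)}) > f_P^{*}(\vartheta) - \delta$, and hence $|\hat\vartheta^{(\xi,h)} - \vartheta| \le \eta$. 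Combined with the previous step, this gives $\P(|\hat\vartheta^{(\xi,h)} - \vartheta| \le \eta) \to 1$ for every $\eta > 0$, which is the claim.

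The main obstacle is the stochastic part of the uniform convergence, since the surviving $p$-values $\{P_j^{*}\}$ are \emph{not} i.i.d.: the fixed-length filter's deletion at the $j$th grid point depends on all other surviving values. I would address this by computing covariances of the kernel indicators $K_h(t - P_i^{*})$ directly, exploiting that two distinct surviving $p$-values must lie in different filter intervals of length $1/m_\xi$ and that the deletion rule acts locally on each interval, so pairwise covariances should be $O(1/m^2)$ and the classical $O(1/(\lfloor\xi m\rfloor h))$ variance rate survives. A cleaner fallback, if the combinatorics becomes unwieldy, is to prove the theorem first for the random filter of Section~2.4, whose surviving sample has a transparent conditional i.i.d.\ structure, and then couple the two filters by showing they disagree on at most $o(\xi m)$ $p$-values with high probability.
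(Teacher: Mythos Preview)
Your proposal is correct and lands on the same two-step skeleton as the paper (uniform consistency of $\hat f_P^{(\xi,h)}$ followed by an argmax-continuity argument), but the execution differs in two respects. First, for the stochastic term the paper uses the Parzen Fourier-inversion route: it writes $\hat f_P^{(\xi,h)}(t)=\frac{1}{2\pi}\int e^{-iut}k(hu)\varphi_s(u)\,du$ with $\varphi_s$ the empirical characteristic function, applies Minkowski's integral inequality, and bounds $\mathrm{Var}(\varphi_s(u))\le 1/s$ to get $\E^{1/2}\sup_t|\hat f_P^{(\xi,h)}-\E\hat f_P^{(\xi,h)}|\le (sh^2)^{-1/2}\int|k|$. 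Your covering/Lipschitz argument reaches the same rate under the same bandwidth condition $\lfloor\xi m\rfloor h^2\to\infty$; the Fourier route has the advantage of giving the uniform bound in one stroke without a union bound, while yours avoids the integrability assumption on the Fourier transform of $K$. Second, the paper links $\vartheta^{\xi}$ to $\vartheta$ through Proposition~\ref{Prop:ConsistentMode} (Hoeffding on the fraction of surviving alternatives), whereas you argue more directly that the surviving density is an affine image of $f_P$ and hence shares its maximizer; your route is cleaner here.

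On the non-i.i.d.\ issue you flag: the paper's variance bound $\mathrm{Var}(\varphi_s(u))\le 1/s$ tacitly treats the $P_j^{*}$ as uncorrelated and does not address the dependence induced by the filter, so your concern is well placed and your proposed remedies (local covariance control or coupling to the random filter) go beyond what the paper actually carries out.
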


We use the following propositions to prove the theorem. 

\begin{prop}[Unimodal distribution]
\label{prop:Unimodal}
For a uniformly continuous and unimodal density on the interval $(0,1)$ with mode $\vartheta$, it follows that, for any $\delta >0$ there exists an $\delta ' >0$ such that, for  $0<t<1,$ 
\begin{equation*}
\left |  \vartheta - t  \right | \geq \delta \Longrightarrow  \left | f(\vartheta) - f(t)  \right |  \geq \delta ' \,.
\end{equation*} 
\end{prop}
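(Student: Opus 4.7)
The plan is to reduce the claim to a single lower bound on $f(\vartheta) - \sup_{t \in S_\delta} f(t)$, where
\[
S_\delta = \{t \in (0,1) : |t - \vartheta| \geq \delta\} = (0, \vartheta - \delta] \cup [\vartheta + \delta, 1),
\]
with the convention that a piece is omitted if its endpoint falls outside $(0,1)$. Because $\vartheta$ is the mode, $f(\vartheta) \geq f(t)$ everywhere, so $|f(\vartheta) - f(t)| = f(\vartheta) - f(t)$, and it suffices to produce a positive $\delta'$ with $f(t) \leq f(\vartheta) - \delta'$ for every $t \in S_\delta$.

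Next, I would use unimodality to localize the supremum to the two boundary points $\vartheta \pm \delta$. Unimodality gives that $f$ is non-decreasing on $(0, \vartheta]$ and non-increasing on $[\vartheta, 1)$, so for $t \leq \vartheta - \delta$ one has $f(t) \leq f(\vartheta - \delta)$, and for $t \geq \vartheta + \delta$ one has $f(t) \leq f(\vartheta + \delta)$. Hence
\[
\sup_{t \in S_\delta} f(t) \leq \max\{f(\vartheta - \delta),\, f(\vartheta + \delta)\}.
\]
Uniqueness of the mode at $\vartheta$ (built into the setup of Theorem~\ref{thm:ModeConvergence}) forces $f(\vartheta - \delta) < f(\vartheta)$ and $f(\vartheta + \delta) < f(\vartheta)$ whenever these points lie in $(0,1)$, so
\[
\delta' \;=\; f(\vartheta) - \max\{f(\vartheta - \delta),\, f(\vartheta + \delta)\}
\]
is strictly positive, and the desired implication follows.

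The only delicate aspect is the boundary bookkeeping. If $\vartheta - \delta < 0$ or $\vartheta + \delta > 1$, the corresponding piece of $S_\delta$ is empty and drops out of the maximum; if both happen, $S_\delta$ itself is empty and the implication is vacuous. Uniform continuity of $f$ on $(0,1)$ is invoked here to guarantee a continuous extension to the closure $[0,1]$, so that the endpoint values $f(\vartheta \pm \delta)$ are well defined and the monotonicity inequalities above carry over to the closed intervals. I expect no real obstacle beyond this case analysis: the monotone structure supplied by unimodality makes the analytic content essentially immediate, and uniform continuity is used only to rule out pathological behaviour near the endpoints $0$ and $1$.
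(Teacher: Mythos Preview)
Your argument is correct and is exactly the natural way to fill in the details: the paper itself does not supply a proof, merely remarking that the proposition ``is easy to obtain,'' so there is no alternative approach to compare against. One minor comment: uniform continuity is in fact not needed at all for your argument, since whenever $\vartheta-\delta>0$ (respectively $\vartheta+\delta<1$) the point already lies in the open interval and $f$ is defined there, while if the inequality fails the corresponding half of $S_\delta$ is empty; the monotonicity from unimodality and the strict inequality from uniqueness of the mode do all the work.
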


\begin{prop}
\label{Prop:ConsistentMode}
Let  $\vartheta^{\xi}$ be the unique mode based on the distribution of  the filtered $p$-values $ \mathscr{F}^\xi$ and $\vartheta$ the mode with respect to the distribution of $p$-values under the alternatives. 
Then 
 $\vartheta^{\xi}  \to  \vartheta$ in probability, as $\lfloor \xi m\rfloor \to \infty.$ 
\end{prop}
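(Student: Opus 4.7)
The plan is to derive an explicit limiting form for the marginal density $f_P^*$ of the filtered $p$-values, observe that this limit is a strictly increasing affine function of $f_P$ and therefore inherits the unique mode $\vartheta$, and then use Proposition~\ref{prop:Unimodal} to upgrade uniform closeness of densities to closeness of their argmaxes.

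First I would revisit the bin-wise approximation from Section~2.1. The filter partitions $(0,1]$ into $m_\xi=\lceil(1-\xi)m\rceil$ intervals of length $1/m_\xi$. In the bin $I_j$ of centre $c_j$, the expected number of original $p$-values is $\tilde f_P(c_j)/(1-\xi)$, while the filter removes exactly one $p$-value (each bin being non-empty with probability tending to $1$ uniformly in $j$, since $\tilde f_P\geq 1-\varepsilon>0$). Normalising the expected retained count $\tilde f_P(c_j)/(1-\xi)-1$ by $\xi m$ and dividing by the bin width yields
$$
f_P^{*}(t) \;=\; \frac{\tilde f_P(t)-(1-\xi)}{\xi} + o(1) \;=\; 1 + \frac{\varepsilon}{\xi}\bigl(f_P(t)-1\bigr) + o(1),
$$
uniformly in $t\in(0,1)$ provided $\xi>\varepsilon$. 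Denote the limit by $f_\infty(t)=1+(\varepsilon/\xi)(f_P(t)-1)$. Because $x\mapsto 1+(\varepsilon/\xi)(x-1)$ is strictly increasing, $f_\infty$ is unimodal (by Assumption~4) with its unique maximum at $\vartheta$.

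To finish, I would fix $\delta>0$ and apply Proposition~\ref{prop:Unimodal} to $f_\infty$ to obtain $\delta'>0$ such that $f_\infty(\vartheta)-f_\infty(t)\geq\delta'$ whenever $|t-\vartheta|\geq\delta$. For $\lfloor\xi m\rfloor$ large enough, the uniform remainder in the display above is smaller than $\delta'/3$, so $f_P^{*}(\vartheta)>f_P^{*}(t)$ on $\{|t-\vartheta|\geq\delta\}$, forcing the maximiser $\vartheta^{\xi}$ of $f_P^{*}$ to lie in $\{|t-\vartheta|<\delta\}$. The ``in probability'' of the statement is then absorbed by a standard Poisson-type concentration bound over the $m_\xi$ bin counts, which shows that the random empirical density of $\mathscr{F}^\xi$ also concentrates uniformly on $f_\infty$.

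The main obstacle is the uniform $o(1)$ bound on the remainder in $f_P^{*}$. Three effects need careful handling: bins that are occasionally empty (controlled through $\tilde f_P\geq 1-\varepsilon>0$ together with $\xi>\varepsilon$); the fact that the fixed-length filter deletes the $p$-value closest to the bin centre, so that individual deletions may cross the nominal bin boundaries; and the small ``hole'' near $c_j$ in the bin-conditional density of retained $p$-values, which vanishes on the scale $1/m_\xi\to 0$ by the uniform continuity of $f_P$ (Assumption~5). Packaging these fluctuations into a genuine $\sup_t$ bound -- rather than the pointwise expectation-level argument sketched above -- is where the actual technical work lies.
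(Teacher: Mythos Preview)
Your approach is genuinely different from the paper's. The paper offers only a one-line sketch: it says the result ``follows from Hoeffding's inequality applied to
\[
\frac{1}{\varepsilon m}\sum_{i:\,H_i=1}\mathbf{1}\{P_i\in\mathscr{F}^\xi\},
\]
where $\mathbf{1}\{P_i\in\mathscr{F}^\xi\}$ are bounded random variables.'' That is, the paper controls the \emph{fraction of true alternatives that survive the filter} via a concentration inequality, and leaves everything else implicit. You instead compute the limiting density of the filtered sample directly as $f_\infty(t)=1+(\varepsilon/\xi)\bigl(f_P(t)-1\bigr)$, observe that this is a strictly increasing affine image of $f_P$ and therefore shares its unique mode, and then invoke Proposition~\ref{prop:Unimodal} to pass from uniform closeness of densities to closeness of their maximisers.

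Your route is more explicit and arguably more complete: it identifies what the filtered density actually is, and it reuses the same argmax-stability mechanism (Proposition~\ref{prop:Unimodal}) that drives the proof of Theorem~\ref{thm:ModeConvergence}, so the two results fit together cleanly. The paper's Hoeffding step, by contrast, only tells you that the retained-alternative count concentrates; the passage from ``most alternatives survive'' to ``the mode of $f_P^{*}$ sits near $\vartheta$'' still requires a comparison of the null and alternative contributions to the filtered density, which is exactly the computation you carry out. Your candid inventory of the obstacles to a uniform $o(1)$ bound (occasionally empty bins, deletions that cross nominal bin boundaries, the small hole around each $c_j$) is precisely the technical content the paper's sketch suppresses. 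One minor point: your bin-count concentration step plays the same role as the paper's Hoeffding invocation, so in that sense the two arguments reconverge at the very end; the difference is that you make visible the density-level structure that links concentration to mode location.
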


\begin{proof}
Proposition \ref{prop:Unimodal} is easy to obtain.  
The proof of  Proposition \ref{Prop:ConsistentMode} follows from Hoeffding's inequality applied to 
$$ \frac{1}{\varepsilon m} \sum_{i: \, H_i=1} \textbf{1} \left\{ P_i \in \mathscr{F}^\xi \right\}\,,$$ 
where $\textbf{1} \left\{ P_i \in \mathscr{F}^\xi \right\} $ are bounded random variables. 
\end{proof}

The proof of Theorem \ref{thm:ModeConvergence} is as follows.
\begin{proof}
Consider the density function  $f_{P}^{*} (t)$ and the true mode $\vartheta^{\xi}$ based on the distribution of $ \mathscr{F}^\xi \,.$
Since $\hat\vartheta^{(\xi,h)}  $ is the sample mode derived from the kernel density estimator after filtration, we prove that 
$$
\hat\vartheta^{(\xi,h)}  \to \vartheta^{\xi},  \quad   \lfloor \xi m\rfloor \to \infty\,.
$$

As we assumed that $f_{P}^{*} (t)$ is uniformly continuous and has a unique mode $\vartheta^{\xi}\,,$ it follows that the Proposition \ref{prop:Unimodal} holds. 
Therefore, it is enough to prove the convergence of $f_{P}^{*} (\hat\vartheta^{(\xi,h)}  )$ in probability, that is,
$$
f_{P}^{*} (\hat\vartheta^{(\xi,h)} )  \overset{p}{\longrightarrow} f_{P}^{*} (\vartheta^{\xi} ),   \quad \lfloor \xi m\rfloor \to \infty\,.
$$
In order to derive the convergence of the estimated density function, we investigate the characteristic function of the sample.
Let $ \left\{ \varphi_j \right\}_{j =1}^{\infty}$  be the sequence of  sample characteristic functions,  
$$ \varphi_j (u) = \E\, e^{iuP_j^{*}} =  \int_{-\infty}^{\infty} e^{iux} \,\text{d} F_{p, j} (x) \,. $$
Correspondingly, we construct the Fourier transform of the kernel function $K$. Suppose we choose a proper kernel $K(u)$ such that the Fourier transform
$$ k(t) = \int_{-\infty}^{\infty} e^{itu} K(u) \, \text{d}u $$
is absolutely integrable. 
Then we derive the kernel density estimator in the form of the sample characteristic function. 
It follows that the kernel density estimator can be written as
\begin{equation*}
\begin{aligned}
{\hat f_P}^{(\xi,h)}(t) &=  \frac{1}{h\lfloor \xi m\rfloor} \sum_{j=1}^{\lfloor \xi m\rfloor} K\left( \frac{t-P^{*}_j}{h} \right) \\
&= \frac{1}{h} \int  K\left( \frac{t-x}{h} \right) \,\text{d}F_{p, \lfloor \xi m\rfloor} (x)
= \frac{1}{h} \int  K_h \left(  t-x  \right) \,\text{d}F_{p, s} (x) ,
\end{aligned}
\end{equation*}
where $K_h(t) = K(t/h)\,.$ 
Let $\mathscr{F}^\xi  \circ g$ denote the Fourier transform of a function $g$. 
The Fourier transform of ${f_P}_{( h)}(t)$ is then
\begin{equation*}
\begin{aligned}
\mathscr{F}^\xi  \circ {f_P}_{(h)}(t) & = \frac{1}{h} \mathscr{F}^\xi  \circ  K_h (t)  \cdot  \mathscr{F}^\xi  \circ F_{p, s} (t) \\
& =   \frac{1}{h} \int K_h (ut) e^{iut} \,\text{d}u  \cdot  \int  e^{iut} \,\text{d} F_{p, s}  (u) \\
&=  \frac{1}{h} \int K \left(\frac{ut}{h} \right) e^{iut} \,\text{d}u  \cdot  \int  e^{iut} \,\text{d} F_{p, s} (u) \\
&=  \int K (ut)  e^{ihut} \,\text{d}u  \cdot  \int  e^{iut} \,\text{d} F_{p, s} (u) \\
&=  k(ht) \varphi_s (t)\,,
\end{aligned}
\end{equation*}
and it follows that the estimator ${\hat f_P}^{(\xi,h)}(t) $ can be written as
\begin{equation}\label{eqn:FourierCh}
\begin{aligned}
{\hat f_P}^{(\xi,h)}(t)  = \mathscr{F}^\xi \circ \mathscr{F}^\xi \circ  {f_P}_{( h)}(t) 
&= \frac{1}{2 \pi} \int e^{-iut}  \mathscr{F}^\xi \circ  {f_P}_{( h)}(u) \,\text{d}u \\
&=  \frac{1}{2 \pi} \int e^{-iut} k(hu) \varphi_s (u) \,\text{d}u\,.
\end{aligned}
\end{equation}

In order to prove the convergence of ${f_P}_{ h}(t) \,,$ we consider
$$ \left |{\hat f_P}^{(\xi,h)}(t)   - f_{P}^{*} (t) \right |^2  = \left | {\hat f_P}^{(\xi,h)}(t)   -  
\E [ {\hat f_P}^{(\xi,h)}(t)  ] +  \E [ {\hat f_P}^{(\xi,h)}(t)  ] -  f_{P}^{*} (t)   \right |^2 \,,$$
and we utilise the fact that the kernel density estimate is asymptotically unbiased such that
$$\sup_t \left | \E [{\hat f_P}^{(\xi,h)}(t) ] -  f_{P}^{*} (t)   \right | \longrightarrow 0 ,  \quad s \to \infty\,,$$
so it is enough to prove 
\begin{equation}\label{eqn:LimSup}
\sup_t \left |{\hat f_P}^{(\xi,h)}(t)  -  \E [{\hat f_P}^{(\xi,h)}(t)  ]  \right | \longrightarrow 0 \,.
\end{equation}
Following (\ref{eqn:FourierCh}), we have
\begin{equation*}
\begin{aligned}
\left | {\hat f_P}^{(\xi,h)}(t)  -  \E [{\hat f_P}^{(\xi,h)}(t) ]  \right |  & \leq  \frac{1}{2 \pi} \int \left|  e^{iut}   k(hu) ( \varphi_s (u) - \E [\varphi_s (u) ] ) \right | \,\text{d}u  \\
& \leq   \frac{1}{2 \pi} \int   | k(hu) | \left | \varphi_s (u) - \E [\varphi_s (u) ] \right |\,\text{d}u  \,.
\end{aligned}
\end{equation*}

Consider the $L^2(\mathscr{P})$ norm of (\ref{eqn:LimSup}), it suffices to prove that
$$ \E^{\frac{1}{2}} \left[ \sup_t \left |{\hat f_P}^{(\xi,h)}(t)  -  \E [{\hat f_P}^{(\xi,h)}(t) ]  \right |^2 \right] \longrightarrow 0\,. $$
Notice that 
\begin{equation*}
\begin{aligned}
\E^{\frac{1}{2}} \left[ \sup_t \left | {\hat f_P}^{(\xi,h)}(t)   -  \E [{\hat f_P}^{(\xi,h)}(t) ]  \right |^2 \right] 
& \leq    \E^{\frac{1}{2}} \left[   \left |    \frac{1}{2 \pi} \int   | k(hu) | \left | \varphi_s (u) - \E [\varphi_s (u) ] \right | \,\text{d}u  \right |^2 \right]   \\
& \leq    \frac{1}{2 \pi} \int     | k(hu) | \, \E^{\frac{1}{2}} \left[ \left | \varphi_s (u) - \E [\varphi_s (u) ] \right |^2 \right]       \,\text{d}u \,,
\end{aligned}
\end{equation*}
which is a straightforward result from Minkowski's integral inequality and \newline
$\E^{\frac{1}{2}} \left[ \left | \varphi_s (u) - \E [\varphi_s (u) ] \right |^2 \right] $, which is the square root of the variance of $\varphi_s (u)$, is bounded by definition
$$ \text{Var}(\varphi_s (u) )  = \text{Var} \left( \frac{1}{s} \sum_{j=1}^{s} e^{iu P_j^{*}}\right) \leq \frac{1}{s} \E \left| e^{iu P_j^{*}} \right|^2 \leq \frac{1}{s}\,.$$
Therefore, 
\begin{equation*}
\E^{\frac{1}{2}} \left[ \sup_t \left |{\hat f_P}^{(\xi,h)}(t)  -  \E [ {\hat f_P}^{(\xi,h)}(t) ]  \right |^2 \right] 
 \leq  \frac{1}{\sqrt{s} h} \int |k(u)| \,\text{d}u \longrightarrow 0
\end{equation*}
if the bandwidth is chosen to satisfy
\begin{equation*}
s h^2 \to 0\,.
\end{equation*}
Thus,  (\ref{eqn:LimSup}) is proved and it follows that
\begin{equation*}
{f_P}_{(s, h)}(t)  \overset{p}{\longrightarrow} f_{P}^{*} (t), \quad s \to \infty\,,
\end{equation*}
and equivalently, as $m \to \infty.$
By  Lemma \ref{prop:Unimodal}, we conclude that the estimator of the mode with filtration converges to the theoretical mode $\vartheta^{\xi}$ of the filtered $p$-values, and therefore, converges to the mode $\vartheta$.
\end{proof}

\begin{remark}
In order to target the small $p$-values near 0 the transformation of the $p$-values to 
$-\log(p)$ is a good idea. 
The kernel density estimator based on the transformed values
is more reliable in practice, particularly when the mode of the alternative $p$-values is near zero, as is the case with Gaussian test statistics. 
\end{remark}



%
%
%
%


\subsection{Finite-sample control of the false discovery rate (FDR)}  
To capture the alternative $p$-values, We consider rejection regions equal to $\mathscr{R}({\hat\vartheta^{(\xi,h)}},\delta)=[\delta_l,\, \delta_u ]$ with $\delta_l=\hat\vartheta^{(\xi,h)}- \delta$ and $\delta_u =\hat\vartheta^{(\xi,h)}+\delta$, where the half-length is $\delta>0$.
The total number of rejections is 
$$R=R({\hat\vartheta^{(\xi,h)}},\delta) =\# \{  i:\,\hat\vartheta^{(\xi,h)}-\delta\leq p_i\leq \hat\vartheta^{(\xi,h)}+\delta\}\,.$$ 
The choice of $\delta$ is crucial and will be discussed next.
\subsubsection{Inference for FDR and pFDR}

The pFDR is investigated in   
\cite{storey2002direct},  \cite{storey2003positive} and \cite{storey2004strong} and considers $p$-values generated by a mixture model 
$$P_i | H_i \sim (1-H_i)\cdot U + H_i \cdot F_P \,,$$
where the indicators $H_i \sim \mathrm{Bernoulli}(\varepsilon)$ are independent random variables. The pFDR of a multiple test with rejection region $\mathscr{R}$ is defined as the conditional probability 
\begin{equation}
\label{Eqn:pFDRprop}
\mathrm{pFDR}(\mathscr{R} ) = \P (H_i = 0 | P_i \in \mathscr{R} ) = \frac{m\,(1-\varepsilon) \delta}{m\, \P (P_i \in \mathscr{R} ) } \,.
\end{equation} 
Here $\delta$ can be any half-width of the rejection interval.
%
The denominator of (\ref{Eqn:pFDRprop}) can be written as   
$$ \P (P_i\in \mathscr{R} ) = \P (P_i\in \mathscr{R} \,|\,R>0) \P (R>0)\,,  $$
where $\P (P_i \in \mathscr{R} \,|\,R>0)$ can be estimated using the observed value of $R$, that is, 
$$\hat{\P} (P_i \in \mathscr{R} \,|\,R>0)= \frac{R\vee 1}{m}\,.  $$

The probability $\P (R>0)$ depends on the distribution of the test statistics and the multiple testing procedure.  
For our rejection region $\mathscr{R}({\hat\vartheta^{(\xi,h)},\delta})$ we assume that the $p$-values outside of the rejection region are uniformly distributed. 
This leads to an estimate of $\P (R>0)  =1-\P (R=0) \leq 1-(1-\delta)^m$.   

The pFDR is thus estimated as
\begin{equation*}
 \widehat{\text{pFDR}}(\mathscr{R}({\hat\vartheta^{(\xi,h)}},\delta)) = \frac{(1-\hat \varepsilon) m \delta}{ \{R\vee 1\} (1-(1-\delta)^m)}\,, 
\end{equation*}
and the FDR, without conditioning on $R>0$, is estimated by
\begin{equation} 
\label{Eqn:FDRhat}
\widehat{\text{FDR}}(\mathscr{R}(\hat\vartheta^{(\xi,h)},\delta)) = \frac{ (1-\hat \varepsilon) m  \delta}{ \{R\vee 1\}}\,. 
\end{equation}

We next consider the estimate of $\varepsilon$. 

\begin{enumerate}
\item 
The simplest idea is to avoid estimating $\varepsilon$ by using $1-\varepsilon \leq 1$ in which case the numerator of  (\ref{Eqn:FDRhat}) is estimated by $m \delta$. Although this estimator 
$$  \widehat{\mathrm{FDR}}({\hat\vartheta^{(\xi,h)}},\delta) = \frac{ m  \delta}{ \{R({\hat\vartheta^{(\xi,h)}},\delta)\vee 1\}}   $$
is widely used, and does provide a bound of the FDR control, we seek to have a more precise and more readily interpretable estimate of $\varepsilon$.

\item  
Based on our filtering procedure, we can select an optimal filtering parameter $\xi$ and use it as an estimator of $\varepsilon$. 
Suppose the filtering procedure starts with an initial $\xi_0=1/2$, estimates the mode and then keeps decreasing the $\xi$-values until   
the estimate  $\hat \vartheta$ stabilizes in the sense that the change in the sample mode is smaller than $C$ when changing the $\xi$-value. The second-to-last value of $\xi$ can be then be taken as $\hat \varepsilon$.     
The estimated FDR is 
 \begin{equation*} 
\widehat{\text{FDR}}(\mathscr{R}({\hat\vartheta^{(\xi,h)}},\delta)) = \frac{ (1- \xi) m  \delta} {R(\hat\vartheta^{(\xi,h)}, \delta)\vee 1}\,. 
\end{equation*}

\item   
\cite{storey2002direct} introduced the estimator based on a fixed rejection region pre-specified by a tuning parameter  $\lambda \in (0,1)$, which is given by   
\begin{equation*} 
1-\hat{\varepsilon} = \frac{W (\lambda)}{(1-\lambda)m}\,,
\end{equation*}
where $W (\lambda)= \# \{  p_i > \lambda  \}$ is the number of accepted nulls with the length of acceptance region $1-\lambda.$  
The parameter $\lambda$ is selected by minimising the mean square error of the FDR estimator. 
This estimator was later  discussed in 
 \cite{storey2003positive},  \cite{storey2004strong},  
\cite{genovese2004stochastic}, \cite{benjamini2006adaptive} and other related works. 
Benjamini used this estimator  
and replaced $m$ by $\hat m_0 = m(1-\hat \varepsilon)$ in the denominator of the linear step-up threshold, that is, to reject the nulls for the  $p$-values $p_{(i)} \leq  i \alpha / \hat m_0. $ 
Others used this estimator in the inference and control of FDR. 

There are also the methods given by 
\cite{benjamini2000adaptive},  
\cite{meinshausen2006estimating},  
\cite{cai2007estimation},  
\cite{jin2007estimating}
that contribute to the estimation of $\varepsilon$, but the estimators are too complicated or not applicable to our study. 

\end{enumerate}

We use the estimator of $\varepsilon$ defined by 
\begin{equation} 
\label{Eqn:EstimateEpsilon}
1-\hat{\varepsilon} = \frac{W_{\hat\vartheta^{(\xi,h)}}(\xi)}{(1-\xi)m}\,,
\end{equation}
which is analogous to Storey's method, with $W_{\hat\vartheta^{(\xi,h)}}(\xi) = \# \left\{ |p_i - \hat\vartheta^{(\xi,h)}| > \xi /2 \right\}$.   
Note that 
\begin{multline*} 
\E (\hat{\varepsilon}) 
= 1- \frac{\E (\sum_{i=1}^m\textbf{1}\{| p_i-\hat\vartheta^{(\xi,h)} |>\xi / 2 \})  }{(1-\xi)m} 
\leq 1- \frac{\E (\sum_{H_i=0} \textbf{1}\{|p_i-\hat\vartheta^{(\xi,h)}|>\xi / 2 \})  }{(1-\xi)m} \\
= 1- \frac{(1-\xi)m_0}{(1-\xi)m} = \varepsilon\,. 
\end{multline*}
Since the simplest structure we assume is the two-point mixture model,    
we can expect that the $p$-values outside the estimated region $\mathscr{R}_{\vartheta, \xi}$ are dominated by the true nulls with  frequency $(1-\varepsilon)(1-\xi)$. 
In other words,  $W_{\hat\vartheta^{(\xi,h)}}(\xi)$ is roughly $(1-\xi)(1-\varepsilon)m,$  
since  the  $p$-values from the nulls are assumed to be uniformly distributed outside the rejection region of which the length $1-\delta$ is replaced by $1-\xi\,.$  
Thus, $W_{\hat\vartheta^{(\xi,h)}}(\xi)/(1-\xi)$ is  analogous to $m-R$, and is therefore utilised to estimate $m-\varepsilon m$. 
\begin{remark}
Although we desire to find an accurate estimate of the fraction of the effects,  a lower bound of $\varepsilon$ suffices, and the parameter $\xi$ needs not be an estimator of $\varepsilon$. 
The estimator (\ref{Eqn:EstimateEpsilon}) is slightly biased, and can be used as a lower bound of $\varepsilon$.  
In reality it is acceptable to claim that the proportion of true alternatives is no less than the declared frequency $\hat \varepsilon$ in  expectation.  
\end{remark}

The final estimator of pFDR is
\begin{equation}
\label{Eqn:pFDRhat}
 \widehat{\text{pFDR}}(\vartheta, \delta) = \frac{W_{\hat\vartheta^{(\xi,h)}}(\xi)\delta}{(1-\xi)\{R\vee 1\} (1-(1-\delta)^m)}\,, 
\end{equation}
while the FDR (see \ref{Eqn:FDRhat} and \ref{Eqn:EstimateEpsilon}) is estimated by
\begin{equation} 
\label{Eqn:FDRhatbis}
\widehat{\text{FDR}}(\vartheta, \delta) = \frac{W_{\hat\vartheta^{(\xi,h)}}(\xi)\delta}{(1-\xi)\{R(\vartheta, \delta)\vee 1\}}\,. 
\end{equation}
Based on these estimators we are able to implement the following data-dependent algorithm to detect and locate the alternatives. 
\begin{enumerate}
\item  Compute the $p$-values $p^m = \left\{ p_1, \ldots, p_m \right\}$ and order them non-decreasingly, i.e.  $p_{(1)} \leq p_{(2)} \leq \cdots \leq p_{(m)}.$ 
\item  Apply the fixed length filter $\mathcal{T}^\text{fixed}$ with a parameter $0<\xi<1$ to obtain the filtered sequence $\mathscr{F}^\xi$. 
\item  Estimate the density of $\mathscr{F}^\xi$ and the mode $\hat\vartheta=\hat\vartheta^{(\xi,h)}$. 
\item  Given $i=1, \ldots, m$ and a significance level $\alpha,\,$  reject $H_{0,(i)}$ if $ | p_{(i)} - \hat \vartheta | \leq   | p_{(\tau)}- \hat \vartheta |$, where $\tau =  \max \left\{\tau: \widehat{\text{FDR}}(\hat \vartheta , p_{(\tau)} - \hat \vartheta) \leq \alpha \right\}.$ 
\end{enumerate}
The control of FDR is based on the estimator   (\ref{Eqn:FDRhatbis}). 
The estimator $\hat \delta = p_{(\tau)}- \hat \vartheta$ is equal the
largest half-width of the rejection region
$\mathscr{R}(\hat\vartheta^{(\xi,h)},\hat\delta)$ subject to the
control of $\widehat{\mathrm{FDR}}$.

\subsubsection{Data-dependent control of FDR for finite sample}

The following theorem gives some understanding of the estimator of the FDR. 

\begin{theorem}
Based on the rejection region  $\mathscr{R}_{\hat\vartheta^{(\xi,h)}} $ and 
The estimator of FDR given by (\ref{Eqn:FDRhat}) satisfies 
\begin{equation*}
\mathbf{E}(\widehat{\text{FDR}}(\vartheta, \delta)) \geq \mathrm{FDR}(\vartheta, \delta) 
\end{equation*} 
for any valid $(\vartheta, \delta)$.
\end{theorem}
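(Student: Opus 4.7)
The plan is to rewrite the gap between the expected estimator and the true FDR as a single expectation, and then combine two ingredients already established earlier: the uniformity of the null $p$-values, and the conservativeness bound $\E(\hat\varepsilon)\leq\varepsilon$ derived in the display just before the remark following (\ref{Eqn:EstimateEpsilon}). Letting $V=V(\vartheta,\delta)$ be the number of null $p$-values falling in $\mathscr{R}(\vartheta,\delta)$, so that $\mathrm{FDR}(\vartheta,\delta)=\E[V/(R\vee 1)]$, the target inequality is equivalent to
$$
\E\!\left[\frac{(1-\hat\varepsilon)\,m\,\delta-V}{R\vee 1}\right]\;\geq\;0.
$$
Conditioning on the vector $H^m$ of hypothesis labels, the null $p$-values are i.i.d.\ uniform on $(0,1)$, so $\E[V\mid H^m]=m_0\delta$ with $m_0=m-\sum_i H_i$, and the earlier bound gives $\E[(1-\hat\varepsilon)\,m\,\delta]\geq (1-\varepsilon)m\delta=\E[V]$. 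This already settles the inequality if $R\vee 1$ is replaced by a constant.

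To lift the marginal version to the weighted one I would use an FKG-type covariance argument, conditional on $H^m$. Given $H^m$, both $V$ and $R$ are sums of independent indicators and $V$ is a subsum of $R$, so $V$ and $1/(R\vee 1)$ are negatively associated, yielding $\E[V/(R\vee 1)\mid H^m]\leq \E[V\mid H^m]\cdot\E[1/(R\vee 1)\mid H^m]$. The numerator of the estimator, $(1-\hat\varepsilon)\,m\,\delta = W_{\hat\vartheta^{(\xi,h)}}(\xi)\,\delta/(1-\xi)$, counts $p$-values \emph{outside} a window of width $\xi$ and is therefore coordinate-wise non-increasing as a $p$-value enters the rejection window $\mathscr{R}(\vartheta,\delta)$; this gives the reverse association $\E[(1-\hat\varepsilon)\,m\,\delta/(R\vee 1)\mid H^m]\geq \E[(1-\hat\varepsilon)\,m\,\delta\mid H^m]\cdot\E[1/(R\vee 1)\mid H^m]$. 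Combining the two factorisations with the marginal bound and integrating over $H^m$ closes the argument.

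The main obstacle is making the monotonicity step for $1-\hat\varepsilon$ fully rigorous, because $\hat\vartheta^{(\xi,h)}$ is itself a kernel-density-based functional of the filtered sample, so moving one $p$-value can shift both the window and its centre. A clean route is a stability argument for the mode estimator combined with an explicit monotone coupling; a more pragmatic alternative, consistent with the regime in which the estimator is actually deployed, is to assume $\xi$ is chosen large compared with $\delta$ so that the $\xi$-window around $\hat\vartheta^{(\xi,h)}$ essentially contains $\mathscr{R}(\vartheta,\delta)$, under which the required association is immediate and the theorem follows.
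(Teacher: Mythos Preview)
Your overall decomposition matches the paper's first step --- both rewrite the gap as the single expectation $\E\big[(\delta\, W_\vartheta(\xi)/(1-\xi) - V)/(R\vee 1)\big]$ --- but from there the arguments diverge. The paper does not use FKG or association inequalities at all; instead it conditions on $S = R - V$ (the number of true discoveries) and invokes Jensen's inequality in $V$, using that
\[
\frac{W-V}{S+V} \;=\; \frac{W+S}{V+S} - 1
\]
is convex in $V$ for fixed $S$ and $W+S>0$, to pass the conditional expectation inside the ratio. Your route via negative/positive association is a genuinely different and arguably more transparent mechanism, and it avoids having to treat $W$ as effectively constant while taking Jensen in $V$.

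Your self-diagnosed ``main obstacle'' is not the real one. Read the theorem statement and the paper's proof carefully: the claim is for \emph{any valid} $(\vartheta,\delta)$, and throughout the proof the paper writes $W_\vartheta(\xi)$, $R(\vartheta,\delta)$, $V(\vartheta,\delta)$ with a deterministic centre $\vartheta$. The kernel-based mode estimator plays no role in this particular inequality, so no stability or coupling argument for $\hat\vartheta^{(\xi,h)}$ is needed. The step that actually needs work in your sketch is the second association inequality, $\E[W/(R\vee 1)]\geq \E[W]\,\E[1/(R\vee 1)]$, which you justify only heuristically. Once $\vartheta$ is fixed and one imposes the natural condition $\xi/2\geq\delta$ (so the $\xi$-window contains $\mathscr{R}$), this becomes clean: with $a_i=\mathbf{1}\{|p_i-\vartheta|>\xi/2\}$ and $b_i=\mathbf{1}\{|p_i-\vartheta|\leq\delta\}$ one has $a_ib_i=0$, hence
\[
\E\!\left[\frac{a_i}{R\vee 1}\right] \;=\; \E[a_i]\,\E\!\left[\frac{1}{R_{-i}\vee 1}\right] \;\geq\; \E[a_i]\,\E\!\left[\frac{1}{R\vee 1}\right],
\]
using independence of $a_i$ and $R_{-i}=\sum_{j\neq i}b_j$ together with $R_{-i}\leq R$; summing over $i$ gives exactly what you need. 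With that repair your argument goes through and stands as a legitimate alternative to the paper's convexity/Jensen device.
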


\begin{proof}

We take the difference 
\begin{multline*}
 \E (\widehat{\text{FDR}}(\vartheta, \delta)) - \text{FDR}(\vartheta, \delta)   
=  \E \,\left[ \frac{\delta  W_{\vartheta}(\xi)/(1-\xi)}{\{R(\vartheta, \delta) \vee 1 \}  }\right]
  - \E \,\left[ \frac{ V(\vartheta, \delta)}{\{R(\vartheta, \delta) \vee 1 \} }\right] \\
=   \E \,\left[ \frac{\delta W_{\vartheta}(\xi )/(1-\xi ) - V(\vartheta, \delta) }{\{R(\vartheta, \delta) \vee 1 \}   }\right]  
\geq   \E \,\left[ \frac{\delta W_{\vartheta}(\xi )/(1-\xi ) - V(\vartheta, \delta) }{ R(\vartheta, \delta)   }  \textbf{1} \left\{ R(\vartheta, \delta) >0  \right\}\right] . 
\end{multline*}
Recalling that 
$$ R(\vartheta, \delta) = S(\vartheta, \delta)  + V(\vartheta, \delta), $$
we condition on $S(\vartheta, \delta) $ and tackle the $V(\vartheta, \delta)$ in both the numerator and the denominator.  
We obtain that the last equation above equals 
\begin{equation*}
\begin{aligned}
& \E \,\left[ \frac{\delta W_{\vartheta}(\xi )/(1-\xi ) - V(\vartheta, \delta) }{ S(\vartheta, \delta)  + V(\vartheta, \delta)   }  \textbf{1} \left\{ R(\vartheta, \delta) >0  \right\}\right]   \\
=  & \E \,\left[  \E \left[ \frac{\delta W_{\vartheta}(\xi )/(1-\xi ) - V(\vartheta, \delta) }{ S(\vartheta, \delta)  + V(\vartheta, \delta)   }  \textbf{1} \left\{ R(\vartheta, \delta) >0  \right\} \Big |   S(\vartheta, \delta)   \right]  \right]  \\
\geq  & \E \,\left[  \frac{ \E \left[ \left( \delta W_{\vartheta}(\xi )/(1-\xi ) - V(\vartheta, \delta)  \right) \textbf{1} \left\{ R(\vartheta, \delta) >0  \right\}  \big |  S(\vartheta, \delta)   \right] }
{ \E \left[  ( S(\vartheta, \delta)  + V(\vartheta, \delta) ) \textbf{1} \left\{ R(\vartheta, \delta) >0  \right\}   \big | S(\vartheta, \delta)   \right]    }  \right] \,,
\end{aligned}
\end{equation*}
with the last inequality obtained by Jensen's inequality on
$V(\vartheta, \delta)$,\, given the fact that
 $$ \frac{W -V}{S + V} =\frac{W+S}{V+S} -1 $$ 
 is a convex function of $V$ with $W+S > 0$.   
Since
\begin{equation*} 
\E \left[   \delta W_{\vartheta}(\xi )/(1-\xi ) -  V(\vartheta,
  \delta)      \right]  \geq  \delta m(1-\varepsilon)(1-\xi )/(1-\xi )
-  m(1-\varepsilon) \delta  =  0, 
\end{equation*}
we conclude that
$$  \E (\widehat{\mathrm{FDR}}(\vartheta, \delta)) \geq \mathrm{FDR}(\vartheta, \delta) \,. $$
\end{proof}

Similarly, we obtain that
\begin{equation*}
 \E (\widehat{\mathrm{pFDR}}(\vartheta, \delta)) \geq \mathrm{pFDR}(\vartheta, \delta) \,, 
\end{equation*}
with our estimator having $1-(1-\delta)^m \geq \P (R>0)  $ in the denominator of (\ref{Eqn:pFDRhat}). 

Following this theorem we can get control of the true FDR by limiting
the estimated $\widehat{\text{FDR}}(\hat\vartheta, \delta)$ below a
desired level.  Our rejection region
$\mathscr{R}(\hat\vartheta, \delta)$ is nested, and the monotonicity
of power is guaranteed.

\begin{prop}[Monotonicity of power]
  For fixed center $\hat\vartheta\,,$ the decision rule defined by the
  rejection region $\mathscr{R}(\hat\vartheta, \delta)$ has monotone
  power in a sense that
$$ \beta_{\hat(\vartheta, \delta)} \geq \beta_{\hat(\vartheta, \delta{'})} \quad \text{for any} \quad  0 < \delta \leq \delta{'} \leq  2\hat\vartheta\,. $$
\end{prop}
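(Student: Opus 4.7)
The plan is to prove the inequality directly from the set inclusion $\mathscr{R}(\hat\vartheta,\delta) \subseteq \mathscr{R}(\hat\vartheta,\delta')$, after first pinning down the intended meaning of $\beta$. Since the statement is labelled ``monotone power'' yet asserts $\beta_{(\hat\vartheta,\delta)} \geq \beta_{(\hat\vartheta,\delta')}$ whenever $\delta \leq \delta'$, the only internally consistent reading is that $\beta_{(\hat\vartheta,\delta)}$ denotes the Type II error probability, i.e.\ the chance of failing to reject when an alternative holds, so that smaller values of $\beta$ correspond to greater power. With this convention,
$$\beta_{(\hat\vartheta,\delta)} \;=\; \mathbb{P}\bigl(P_i \notin \mathscr{R}(\hat\vartheta,\delta) \,\big|\, H_i = 1\bigr) \;=\; 1 - \int_{\hat\vartheta-\delta}^{\hat\vartheta+\delta} f_P(t)\,\mathrm{d}t.$$

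First I would record the elementary nesting: for any $0 < \delta \leq \delta' \leq 2\hat\vartheta$,
$$\mathscr{R}(\hat\vartheta,\delta) \;=\; [\hat\vartheta - \delta,\ \hat\vartheta + \delta] \;\subseteq\; [\hat\vartheta - \delta',\ \hat\vartheta + \delta'] \;=\; \mathscr{R}(\hat\vartheta,\delta').$$
The side condition $\delta' \leq 2\hat\vartheta$ is used to keep the rejection region within a meaningful neighbourhood of the estimated mode on the positive half-line, so that both intervals lie inside the support of the $p$-values. Next, taking complements and applying monotonicity of probability under the alternative distribution $F_P$ yields
$$\mathbb{P}\bigl(P_i \notin \mathscr{R}(\hat\vartheta,\delta) \,\big|\, H_i = 1\bigr) \;\geq\; \mathbb{P}\bigl(P_i \notin \mathscr{R}(\hat\vartheta,\delta') \,\big|\, H_i = 1\bigr),$$
which is precisely the claimed $\beta_{(\hat\vartheta,\delta)} \geq \beta_{(\hat\vartheta,\delta')}$.

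There is no substantive obstacle in this argument: once the inclusion is recorded, the inequality is an immediate consequence of the monotonicity of $\mathbb{P}$ under set containment, and no further properties of $\hat\vartheta$, of $f_P$, or of the filtering/mode-estimation procedure need to be invoked. The only care required is in fixing the convention for $\beta$, since under the alternative reading of $\beta$ as power itself the direction of the stated inequality would be reversed; for the proposition to read correctly as written, $\beta_{(\hat\vartheta,\delta)}$ must be interpreted as the Type II error rate, which is what the phrasing ``monotone power'' combined with the $\geq$ direction forces.
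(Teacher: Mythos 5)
Your argument is correct and is exactly the reasoning the paper itself relies on: the paper gives no formal proof of this proposition, only the preceding remark that ``Our rejection region $\mathscr{R}(\hat\vartheta,\delta)$ is nested, and the monotonicity of power is guaranteed,'' which is precisely the set-inclusion argument you spell out. Your clarification that $\beta$ must be read as the Type~II error rate (or else the inequality direction in the statement is reversed) is a fair and necessary observation, since the paper never defines $\beta$.
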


\begin{remark}
  \cite{storey2003positive} also considered the asymptotic control of
  the FDR and pFDR with $\varepsilon_m = \varepsilon$ fixed.  We are
  not interested in this parametrisation since the number of
  significant components can be moderately large if it is proportional
  to $m$.
\end{remark}

\subsection{Numerical results}

With $m=1000$ hypotheses, we applied the proposed filtration algorithm
to Cauchy mixtures with $\varepsilon=0.05, 0.10, 0.15, 0.20, 0.25,$
and $\mu= 6, 8, 10, 12, 14, 16, 18, 20$.  For each configuration we
ran $N=200$ replications and get the sample value of the parameters
and the true and false discoveries.

\begin{table}
\caption{\label{Tab:SimCauchyFDR} Simulation results with shifted Cauchy test statistics. The entries show the average values across 1000 replicas for the case of $\varepsilon=$ 0.15. The empirical FDR is fixed at 0.10. The entry for $\widehat{\mathrm{FDR}}$ is the average for the estimated FDR.}
\centering
\begin{tabular}{c|c|c|c|c|c|c|c|c}
$\varepsilon = 0.15$           &	$\mu=6$	& $\mu=8$	&$\mu=10$	&$\mu=12$	&$\mu=14$	&$\mu=16$	&$\mu=18$	&$\mu=20$ \\  
$\vartheta$  			  & 0.05121 & 0.03899 & 0.03142 & 0.02628  & 0.02258  & 0.01979  & 0.01761  & 0.01586    \\   
$\hat{\vartheta}$		  & 0.05193 & 0.03935 & 0.03152  & 0.02636  & 0.02264  & 0.01984  & 0.01763  & 0.01587   \\  
$\hat{\delta}$ 			  & 0.01573  & 0.01521 & 0.01412  & 0.0139  & 0.01317 & 0.01256  & 0.01241    & 0.01288   \\	 
$\widehat{\mathrm{FDR}}$ & 0.08577  & 0.09389 & 0.08816  & 0.08660  & 0.08699  & 0.08493 & 0.08364  & 0.08377     \\	 
FDR  				  & 0.08547 & 0.09238  & 0.08920  & 0.08258 & 0.08364  & 0.07820  & 0.07717 & 0.07451   \\	
$\mathrm{TP}/ (\varepsilon m)$		 & 0.4924  & 0.4967 & 0.5994 & 0.7506 & 0.8175 & 0.8702 &  0.9006 & 0.9219  \\  
\end{tabular}
\end{table}

\begin{figure}
\centering
\includegraphics[width=0.6\textwidth]{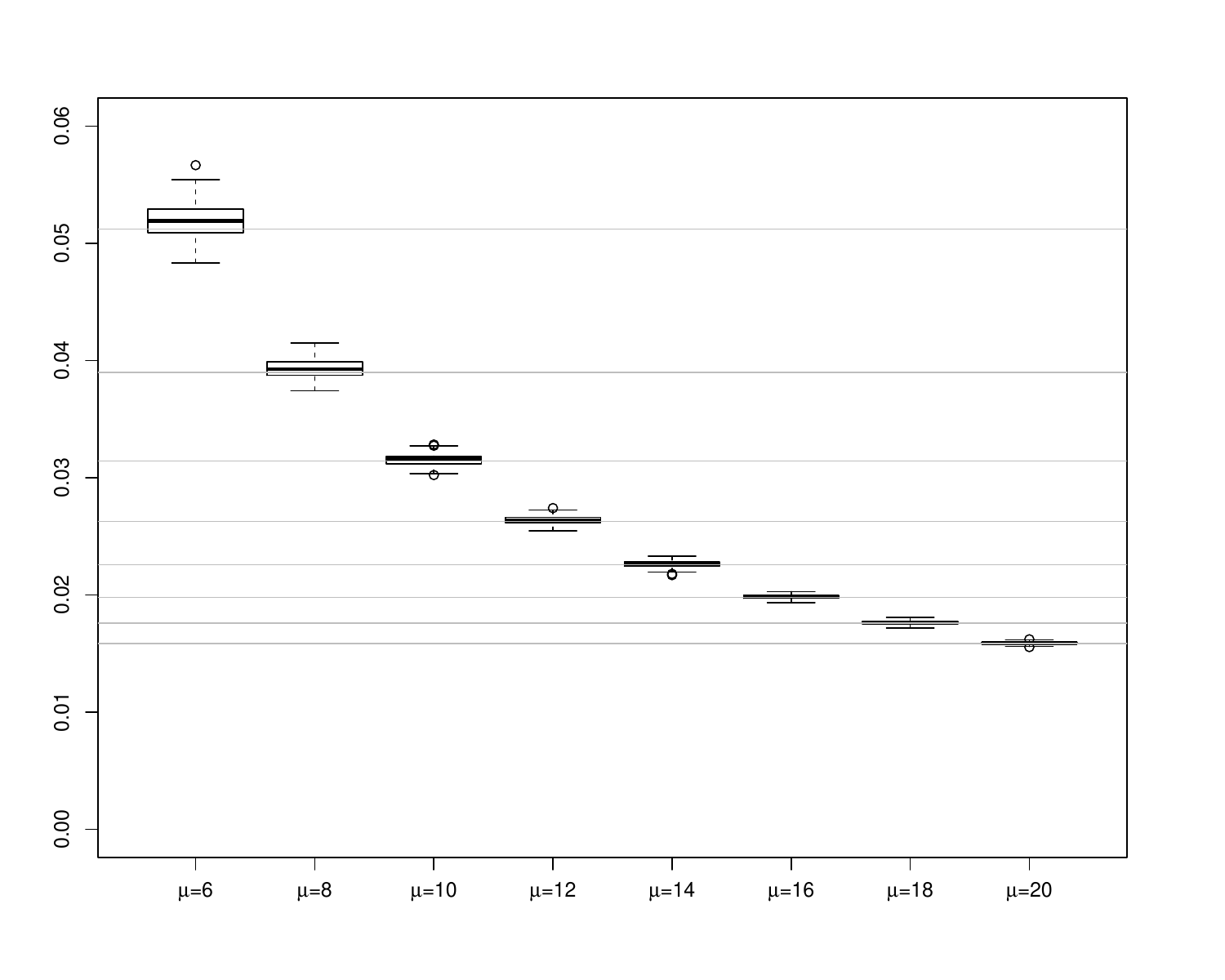}
\caption{\label{Fig:BoxplotModeEst} The filtering estimate of the mode of the alternative $p$-values. The estimate is based on the kernel density estimator with the Gaussian kernel and bandwidth tuned to the remaining $p$-values. See also Eq. (\ref{Eq:mode})}

\end{figure}

The estimates of the alternative mode $\hat \vartheta$ are shown in
Figure \ref{Fig:BoxplotModeEst}. The mode estimates decrease as the
shift $\mu$ increases, that is, smaller $p$-values become indicators
for true alternatives.  A pre-specified level $\alpha=0.1$ is utilised
to control the data-dependent estimator
$\widehat{\mathrm{FDR}}(\vartheta, \delta) \leq \alpha$ given by
(\ref{Eqn:FDRhat}).  We choose the rejection region
$\mathscr{R}({\hat \vartheta, \hat \delta})$ with the maximal length
$\hat \delta$ subject to the control of
$\widehat{\mathrm{FDR}} \leq \alpha$.  The average FDR is shown in the
table by $\widehat{\mathrm{FDR}}$.  The true value of FDR computed
from the sample is different from the estimator
$\widehat{\mathrm{FDR}}(\vartheta, \delta)$, which is influenced by
the tuning parameter $\xi$ as we propose in the filtering procedure.
With the peak of the $p$-value getting narrow, the rejection region
contains more true alternatives.

\section{Discussion}

\subsection{Positive FDR, local FDR and empirical Bayes} 

In this section we compare our procedures to related ideas and concepts described in 
\cite{efron2001empirical},  \cite{efron2002empirical},  \cite{storey2002direct},  \cite{storey2003positive},
and  
 \cite{cai2017optimal}. 

\subsubsection*{pFDR}

Storey's pFDR is also referred to as the posterior FDR, because it
applies the Bayes formula to the FDR using the independent Bernoulli
model with a common prior probability for $H_i=1$.  Recall the formula
for the pFDR of the rejection region $\mathscr{R}$
\begin{equation*}
  \mathrm{pFDR}(\mathscr{R}  ) =  \P ( H_i=0   | P_i \in \mathscr{R}  ) 
  = \frac{(1-\varepsilon) \P(P_i \in \mathscr{R}  | H_i=0 )
  }{(1-\varepsilon) \P(P_i \in \mathscr{R}  | H_i=0 )  +  \varepsilon
    \P(P_i \in \mathscr{R}  | H_i=1 ) } \,. 
\end{equation*}
This is equivalent to our control of the operating characteristics TPR/FPR, 
However, Storey and other authors of the related work only discuss the
case when
$ \P(P_i \in \mathscr{R} | H_i=1 ) / \P(P_i \in \mathscr{R} | H_i=0 )$
is decreasing, which is the same condition we mentioned for detecting
light-tailed alternatives.
These papers also limit discussion to the asymptotic cases under the assumption that 
$$ \sum_{i=1}^m (1-H_i) /m \longrightarrow \pi_0,  \quad m \to \infty, $$
while we consider the asymptotic framework with 
$$  \sum_{i=1}^m (1-H_i) /m = 1-\varepsilon_m = 1- m^{-\gamma}  \longrightarrow 1,  \quad m \to \infty. $$
Although the numbers of the true nulls and alternatives both tend to
infinity, the ratio will be difficult to detect, which also motivated
us to investigate the asymptotically detectable region, that is,
detectable clustering in the limit.

\subsubsection*{Local FDR}

The \textit{local false discovery rate} was originally developed for the $z$-values
and uses results from empirical Bayes inference.  The $z$-values have
densities $f_0(z)$ under the nulls and $f_1(z)$ under the alternatives
with
$f_0(z) = \varphi(z) = \frac{1}{\sqrt{2\pi}} \mathrm{e}^{-z^2/2}$.
With the fixed prior probabilities $\pi_0$ and $\pi_1$, the density of
the observed $z$-values is $f(z) = \pi_0 f_0(z) + \pi_1 f_1(z)$. The
\textit{local Bayes false discovery rate} is then defined as
 \begin{equation*}
 \mathrm{Lfdr}(z) = \P(\text{null } | \text{ test statistic } z) = \frac{ \pi_0 f_0(z)}{ \pi_0 f_0(z)+ \pi_1 f_1(z)} \,,
 \end{equation*}
where the densities in the numerator and the denominator need to be estimated. 

In our work, we consider the distribution of the $p$-values, and we
maximise the local ratio TPR(t)/FPR(t) to get the significance center
such that a large number of true positives are discovered subject to a
small increment of the false positives.  We can equivalently define
the local FDR for the $p$-values as
$$\mathrm{Lfdr}_p(t)  = \frac{1-\varepsilon }{1-\varepsilon +\varepsilon\,f_P(t)} .$$ 
Efron's density estimation makes use of the normal distribution of the
$z$-values, whereas we rely on the uniform distribution of the
$p$-values from the null hypotheses.  

Maximising the local ratio of TPR/FPR  is equivalent to minimising the
local FDR, taking the Lfdr(t) as a point-wise threshold sequence
defined for the $p$-values,  and in addition, equivalent to minimising
the pFDR as well.  
We are particularly interested in looking for the most informative
region of the $p$-values without a pre-determined rejection rule.   
Our method is adaptive and data-dependent, and is also interpretable.

\subsubsection{Screening for high-throughput data}

A filtering method similar to ours appeared in \cite{cai2017optimal}
for a different purpose.  In high-dimensional multiple testing, one of
the main issues is to reduce the dimension according to the capacity
of the experiments.  They discussed a screening approach applied to
high-throughput applications, leading to a multi-stage procedure.
Their selection rule is defined by the indicator
$$ \delta_i = \textbf{1} \{ \hat T(Z_i) \leq  t_i\} \,, $$
where $\hat T(Z_i) $ is an estimator of the local FDR and $t_i$ is a
critical value.  The observations with $ \delta_i =1$ are retained and
their paper derives the conditions necessary for a valid screening
procedure.  They use classic kernel density estimation to estimate the
densities and the effectiveness of their estimate Lfdr($z$) depends on
the distribution of the test statistics.

\subsection{Multi-mode estimation and rejection sets}

Consider the mixture model 
\begin{equation*}
f_1(x) = \sum_{j=1}^k \pi_j f_0(x - \mu_j), 
\end{equation*}
of which the proportions $\pi_i$'s  and the shifts $\mu_i$'s  are unknown and not identical.    
Following the idea of the two-point mixture model, we propose the rejection sets
\begin{equation*}
\mathscr{R} =  \left\{ \bigcup_{i=1}^{k}  \mathscr{R}^{(i)} \right\}
\end{equation*}
where
$ \mathscr{R}^{(i)} =  \mathscr{R}_{\vartheta_i, \delta_i} $ is the $i$-th rejection interval.  

This problem of detecting clustered alternative components can be converted to a problem of change point detection, as is analysed by 
\cite{siegmund2011false},   \cite{zhang2010detecting},  \cite{cao2015changepoint} et al. 

\begin{figure}
\centering
\includegraphics[width=0.6\textwidth]{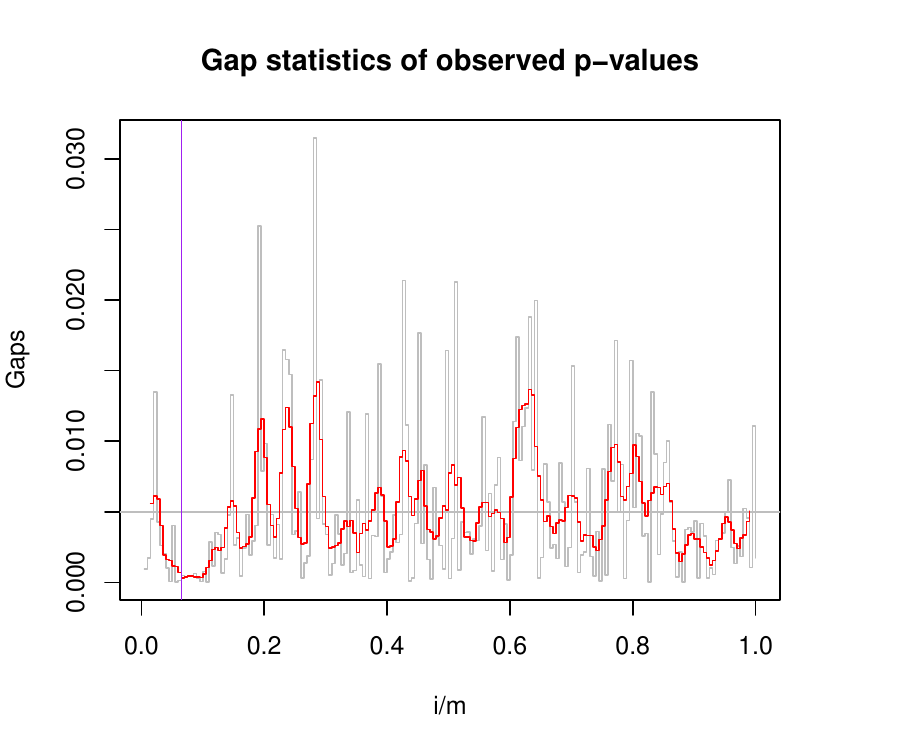}
\caption{\label{Fig:SmoothedGaps} The smoothed gaps of the $p$-values  from the Cauchy mixture model.}

\end{figure}

Instead of detecting the change point in the sequence of $p$-values, we propose an approach to detect the rejection centers based on smoothed gap statistics. 

Define a \textit{smoothed} version of observed $p$-values
\begin{equation*}
 p_{j}^{\dagger} = \frac{1}{j} \sum_{i=1}^{j} p_{(i)},
 \end{equation*}
for $j=1, \ldots, m,$ and let $p_0^{\dagger} = p_{(0)} = 0.$ For $j=1, \ldots, m$ we define the weighted gap statistic
\begin{equation*}
\begin{aligned}
G_j^{\dagger} &=  p_{j}^{\dagger} -   p_{j-1}^{\dagger}  \\
 &= \frac{(j-1)(p_{(j)}-p_{(j-1)}) + (j-2)(p_{(j-1)}-p_{(j-2)}) + \cdots + (p_{(2)}-p_{(1)})}{j(j-1)}  \\
 &=  \frac{(j-1) G_{j-1} + (j-2) G_{j-2} + \cdots + G_1 }{ j(j-1)} \,,
\end{aligned}
\end{equation*}
which can be re-written as a weighted sum of the original gap statistics $G_j = p_{(j)} - p_{(j-1)}$.  
We give a larger weight to $G_i$  as it is closer to  $G_j^{\dagger},$
which means that the weighted sum of gap statistics capture more
precisely the local properties of the $p$-values.  

When the observations are i.i.d. from the null distribution, the $p$-values are uniformly distributed, 
and it follows that $G_i = p_{(i)}-p_{(i-1)} \sim \text{Beta}(1, m)$,
with   expectation $\E (p_{(j)}-p_{(j-1)})=1/(1+m)$.  
The weighted gaps have a Beta distribution with $\E (G^{\dagger}_{j})
= \E (p^{\dagger}_{j} - p^{\dagger}_{j-1})=1/(2+2m)$.  
Therefore, it is reasonable to compare the weighted gaps to $1/(2+2m)$
and find the region where the cluster of  alternative $p$-values
occurs, if any.

\begin{figure}
\centering
\includegraphics[width=0.6\textwidth]{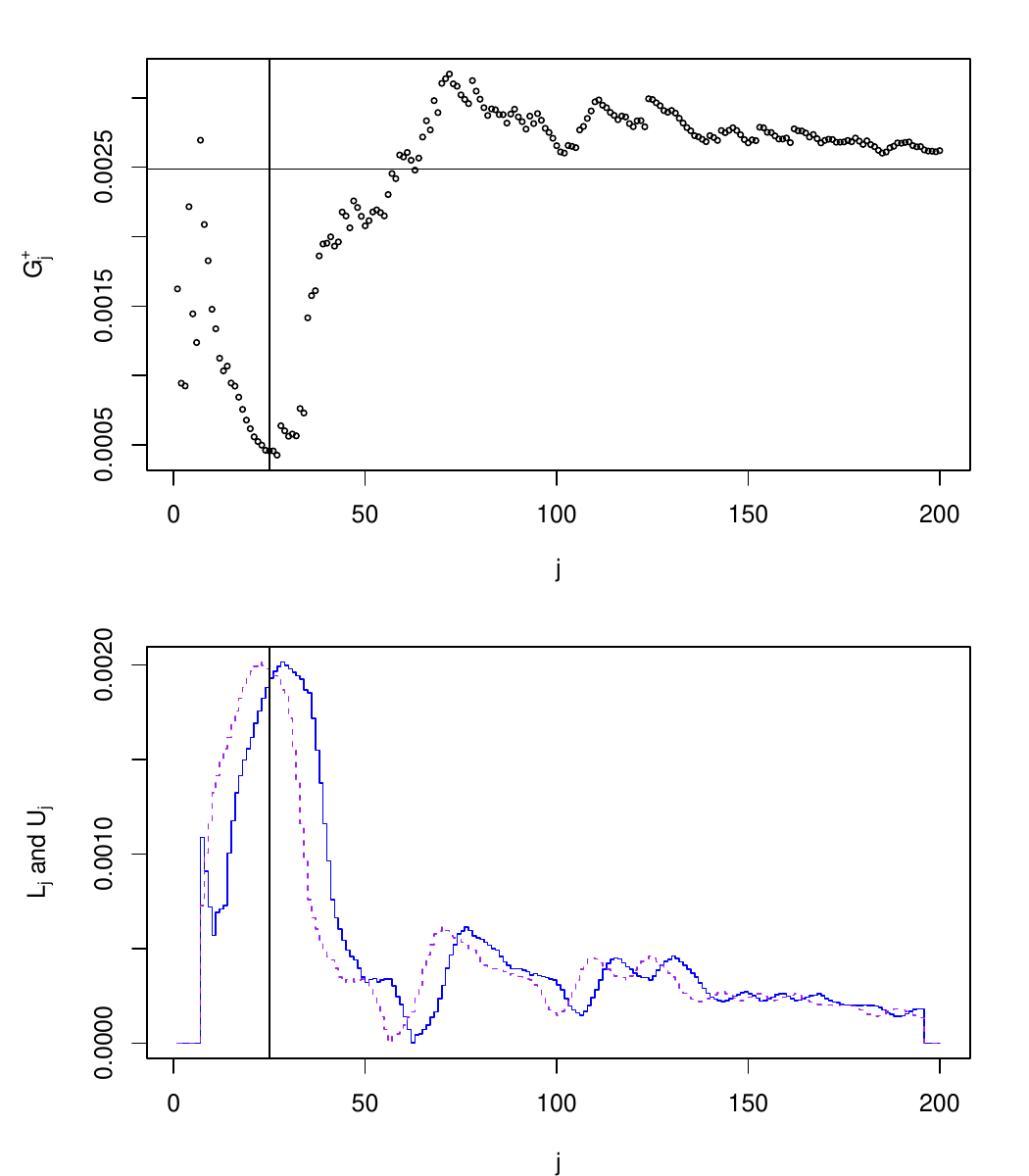}
\caption{\label{Fig:WeightedGapsLocalDiscrepancy} The weighted gaps and the lower and upper local discrepancies}
\end{figure}

For two-point mixture models, one can intuitively take the $p$-value that minimises the weighted gap, denoted by $\hat{p}_c$,  to be the center of the cluster from the alternatives. 
The change point of $G_j^{\dagger}$ gives a plausible estimation of the significance center. 
Formally, define the local discrepancies 
\begin{equation*}
  L_j = \left | \frac{1}{k}\sum_{i=j-k}^{j-1}
    G_j^{\dagger}-\frac{1}{2(m+1)} \right | ,  \quad                         
  U_j = \left | \frac{1}{k}\sum_{i=j}^{j+k-1} G_j^{\dagger}-\frac{1}{2(m+1)} \right | ,
\end{equation*}  
where $L$ stands for ``lower'' and $U$ stands for ``upper''.   
We want $L_j$ and $U_j$ to be sensitive to the change in the distribution of the $p$-value gaps.    

\subsection*{Conclusions} 

The multiple testing literature takes it as a given, that the true
alternatives have very small $p$-values. This assumption is wrong in
the case of test statistics with long-tailed laws. More general
rejection regions can be adaptively selected based on the observed
$p$-values. We present such a robust multiple testing procedure and
examine its properties. Our approach uses a filter that enlarges the
proportion of true alternatives among the filtered $p$-values and then
estimates a center for the rejection region by estimating the location
of the mode of the filter results. An interval around this center is
chosen in order to keep control over the FDR. In some instances, it
may be necessary to consider multiple modes in the density of the
alternative $p$-values. It would be straightforward to generalize the
methods discussed here to this case. 

The mode estimator  is thus utilised  as the mid-point of the central
peak of the $p$-values from the alternatives, which in our definition,
serves as the significance center of the rejection region
$\mathscr{R}.$   
Unlike for the Gaussian test procedures, we define the rejection
region $\mathscr{R}({\hat\vartheta, \delta})$ centered at the mode
$\hat\vartheta$ and of length $\delta$.   
The center $\hat\vartheta$ is estimated by a kernel density estimation
applied to the filtered $p$-values, and the length  $\delta$ is chosen
by data-dependent control of the FDR.  
We proved that the expected value of the estimator of FDR provides a
good upper bound of the true value of the estimated FDR, such that
this data-dependent control functions well.    
In this procedure we do not propose an estimate of $\delta$. 
An optimal $\hat{\delta}$ is chosen to achieve the maximal  power with
the  estimated $\mathscr{R}({\hat\vartheta,\hat\delta})$  bounded by
$\alpha. $ 

\bibliography{bibliography}
\end{document}